\newtheorem{theorem}{Theorem}
\newtheorem{lemma}{Lemma}
\newtheorem{corollary}{Corollary}
\theoremstyle{definition}
\newtheorem{definition}{Definition}
\newtheorem{observation}{Observation}
\newcommand{\outpro}[2]{\vert #1\rangle\langle #2\vert}
\newcommand{\ket}[1]{\vert #1\rangle}
\newcommand{\modulus}[1]{\vert #1 \vert}
\newcommand{\tr}[1]{\mathsf{Tr}(#1)}
\newcommand{\ptr}[2]{\mathsf{Tr}_{#1}(#2)}
\begin{document}


\title{Schmidt Decomposition of Multipartite States}%

\author{Mithilesh Kumar}
 \email{Contact author: mithilesh.qit@gmail.com, mithilesh.kumar@krea.edu.in}
 \homepage{https://drmithileshkumar.com}
\affiliation{
 Krea University, Sri City, India
}



\begin{abstract}
Quantum states can be written in infinitely many ways depending on the choices of basis. The Schmidt decomposition of a quantum state has many useful properties in the study of entanglement. All bipartite states admit Schmidt decomposition, but this does not extend to multipartite systems. We obtain necessary and sufficient conditions for the existence of Schmidt decompositions of multipartite states. Moreover, we provide an efficient algorithm to obtain the decomposition for a Schmidt decomposable multipartite state.
\end{abstract}

\maketitle


\section{\label{sec:intro}Introduction}
Consider Alice and Bob who share a system in some state \(\ket{\psi}\in\mathcal{H}_A\otimes\mathcal{H}_B\). Alice and Bob may choose orthonormal bases \(\{\ket{i_A}\}\) and \(\{\ket{j_B}\}\) independently and express the state as \begin{align}
    \ket{\psi} = \sum_{i,j} a_{ij}\ket{i_A}\ket{j_B}
\end{align} 
such that \(a_{ij}\in \mathbb{C}\) and \(\sum_{ij}|a_{ij}|^2 = 1\). In general, such an expansion may have at most \(n_A\times n_B\) terms. The actual number of terms depends on the chosen bases. Bases that would require minimum number of terms can be obtained by the so-called Schmidt decomposition. Schmidt decomposition provides orthonormal set of vectors (which can be extended to bases) \(\{\ket{k_A}\}\) and \(\{\ket{k_B}\}\) such that 
\begin{align}
    \ket{\psi} = \sum_{k}\lambda_k \ket{k_A}\ket{k_B}
\end{align}
where \(\lambda_k\geq 0\) are real and \(k\leq \min(n_A, n_B)\).

Schmidt decomposition of bipartite states is an immediate consequence of singular value decomposition (SVD) introduced by Schmidt \cite{Schmidt1907}: the matrix \(A\) of coefficients \(a_{ij}\) has SVD as \(A = UDV\). The unitary matrix \(U\) transforms states \(\ket{i_A}\) to \(\ket{k_A}\) and the unitary matrix \(V\) transforms states \(\ket{j_B}\) to \(\ket{k_B}\). For an introduction to Schmidt decomposition of bipartite states and its properties, readers are referred to Nielsen and Chuang \cite{nielsen00}. Some of these properties are listed below:
\begin{enumerate}
    \item \textbf{Minimality}: Schmidt vectors ensure minimum number of terms in the expansion.
    \item \textbf{Bijection}: There is a bijection between the sets \(\{\ket{k_A}\}\) and \(\{\ket{k_B}\}\).
    \item \textbf{Schmidt number}: If the number of terms (known as Schmidt number) is more than one, then the system is entangled.
    \item \textbf{Identical spectrum}: The reduced density matrices \(\rho_A\) and \(\rho_B\) have the same spectrum given by \(\{\lambda_k^2\}\).
    \item \textbf{Local unitary independence}: Schmidt coefficients \(\lambda_k\) are independent of local unitary transformations.
\end{enumerate}
Given Schmidt form of a state, computing entropy or deciding state transformation through majorization or approximation (through keeping the largest $k$ terms) becomes easy. Naturally, one would like to generalize such a decomposition to multipartite states. Although Schmidt decomposition exists for every bipartite state, this does not extend to multipartite states. There are tri-partite states that do not admit such a decomposition. Asher Peres \cite{PERES199516} studied tripartite Schmidt decomposition by obtaining conditions on intermediate matrices after first considering the Schmidt decomposition of a bipartition of the state. Thapliyal \cite{Thapliyal} obtained a condition for Schmidt decomposability of a state and connected it to the separability of the state. Pati \cite{pati2000existence} obtained another criterion for tripartite case using partial inner product. Acín et al. \cite{Ac_n_2000} extended the Schmidt decomposition to three-qubit states. 

Carteret et al. \cite{Carteret} worked on a specific generalization of Schmidt decomposition of multipartite states. In their work, it is shown that a multipartite state can be expressed in orthonormal sets of vectors such that all terms have only real coefficients and local unitary independence is ensured. But rest of properties listed above are not guaranteed. 

In this paper, we want to work with Schmidt decompositions that satisfy all the properties 1-5 listed above. A state of a multipartite system with subsystems labeled by \(A_1, A_2, ..., A_n\) can be written as \begin{align}
    \ket{\psi} = \sum_{i_1i_2...i_n} a_{i_1i_2...i_n}\ket{i_1^{A_1}}\ket{i_2^{A_2}}\cdots\ket{i_n^{A_n}}
\end{align}
A natural generalization of \emph{Schmidt form} of the state is 
\begin{align} \label{eq:multischmidt}
    \ket{\psi} = \sum_\ell \lambda_\ell \ket{\ell_{A_1}} \ket{\ell_{A_2}}\cdots \ket{\ell_{A_n}}
\end{align}
where \(\lambda_\ell\geq 0\) are real, states \(\ket{\ell_{A_1}}, \ket{\ell_{A_2}},\cdots, \ket{\ell_{A_n}}\) are orthonormal in their respective Hilbert spaces. For example, a state of a tripartite system with subsystems \(A, B\) and \(C\) can be written as \begin{align}
    \ket{\psi} = \sum_{ijk} a_{ijk} \ket{i_A}\ket{j_B}\ket{k_C}
\end{align}
with Schmidt form expressed as
\begin{align}\label{eq:trischmidt}
    \ket{\psi} = \sum_\ell \lambda_\ell \ket{\ell_A}\ket{\ell_B}\ket{\ell_C}
\end{align}
where \(\lambda_\ell\geq 0\) are real, states \(\ket{\ell_A}, \ket{\ell_B}\) and \(\ket{\ell_C}\) are orthonormal in their respective Hilbert spaces. Similarly, a state of a quadripartite system with subsystems \(A, B, C\) and \(D\) can be written as \begin{align}
    \ket{\psi} = \sum_{lmno} a_{lmno} \ket{l_A}\ket{m_B}\ket{n_C}\ket{o_D}
\end{align}
with \emph{Schmidt form} of the state as
\begin{align}\label{eq:quadschmidt}
    \ket{\psi} = \sum_k \lambda_k \ket{k_A}\ket{k_B}\ket{k_C}\ket{k_D}
\end{align}
where \(\lambda_k\geq 0\) are real, states \(\ket{k_A}, \ket{k_B}, \ket{k_C}\) and \(\ket{k_D}\) are orthonormal in their respective Hilbert spaces. 

\subsection{Contributions}
 The main contribution of this paper is a necessary and sufficient condition for Schmidt decomposition (Equation \ref{eq:multischmidt}) of multipartite states. In addition, separate such conditions are provided in the special case of tripartite (Equation \ref{eq:trischmidt}) and quadripartite states (Equation \ref{eq:quadschmidt}). Using these conditions, we have provided algorithms to check whether a multipartite state is Schmidt decomposable and provide such a decomposition if it is. Finally, we prove that SCHMIDT-PARTITION problem is NP-complete. 
\section{\label{sec:results}Schmidt decomposition of tripartite states}  
We start with obtaining necessary and sufficient condition for Schmidt decomposition of Equation \ref{eq:trischmidt}. 
For a given matrix \(A\), its adjoint \(A^\dagger\) is a matrix obtained by first taking transpose of \(A\) and then complex conjugating its elements, i.e \begin{align}
    A^\dagger_{ij} = A^*_{ji}
\end{align} 
\begin{definition}
    Two matrices \(A\) and \(B\) are said to \emph{commute} if \(AB = BA\).
\end{definition}
\begin{definition}
    A matrix \(A\) is called \emph{normal} if it commutes with its adjoint, i.e. \(AA^\dagger = A^\dagger A\).
\end{definition}
Normal operators satisfy the following well-known spectral decomposition theorem.
\begin{theorem}
    A matrix is normal if and only if it is diagonalizable, i.e. there exists a unitary matrix \(U\) such that \begin{align}
        A = U^\dagger D U
    \end{align}
    where \(D\) is a diagonal matrix. We say that \(U\) diagonalizes \(A\).
\end{theorem}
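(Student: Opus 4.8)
The plan is to prove the two implications separately, reducing the hard direction to a triangularization result. For the easy direction, suppose $A = U^\dagger D U$ with $U$ unitary and $D$ diagonal. First I would compute $A^\dagger = U^\dagger D^\dagger U$ and then observe
\begin{align}
AA^\dagger = U^\dagger D D^\dagger U, \qquad A^\dagger A = U^\dagger D^\dagger D U.
\end{align}
Since $D$ is diagonal, both $DD^\dagger$ and $D^\dagger D$ are the diagonal matrix whose $i$-th entry is $|D_{ii}|^2$, so they coincide; hence $AA^\dagger = A^\dagger A$ and $A$ is normal. This direction is purely computational and presents no difficulty.

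For the converse, the central tool I would invoke is Schur's triangularization theorem: every square complex matrix $A$ admits a unitary $U$ with $UAU^\dagger = T$ upper triangular, equivalently $A = U^\dagger T U$. Granting this, the strategy is to show that when $A$ is normal the triangular factor $T$ is forced to be diagonal. First I would check that normality is preserved under unitary similarity: writing $T^\dagger = U A^\dagger U^\dagger$ yields
\begin{align}
TT^\dagger = U A A^\dagger U^\dagger = U A^\dagger A U^\dagger = T^\dagger T,
\end{align}
so $T$ is itself normal.

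Then I would show that an upper triangular normal matrix must be diagonal, proceeding by induction on the rows. Comparing the $(1,1)$ entries of $TT^\dagger$ and $T^\dagger T$, and using that $T_{k1}=0$ for $k>1$, gives
\begin{align}
(TT^\dagger)_{11} = \sum_{k} |T_{1k}|^2 = (T^\dagger T)_{11} = |T_{11}|^2,
\end{align}
which forces $T_{1k}=0$ for every $k>1$; thus the first row is diagonal. Repeating the comparison for the $(2,2)$ entry, now knowing the first row is already diagonal, kills the off-diagonal entries of the second row, and so on, yielding $T=D$ diagonal. Substituting back gives $A = U^\dagger D U$ with $U$ the Schur unitary, completing the proof.

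The main obstacle I anticipate is the triangularization step itself, since essentially all the content is hidden there. Establishing Schur's theorem requires the existence of at least one eigenvalue–eigenvector pair (via the fundamental theorem of algebra applied to the characteristic polynomial) followed by an induction on dimension, peeling off one unit eigenvector at a time and extending it to an orthonormal basis. If I preferred not to cite Schur as a black box, I would instead prove the normal case directly by the same induction: an eigenvector $v$ of a normal $A$ with $Av=\lambda v$ is automatically an eigenvector of $A^\dagger$ with eigenvalue $\bar\lambda$, which guarantees that the orthogonal complement $v^\perp$ is invariant under both $A$ and $A^\dagger$, so the induction restricts cleanly to a normal operator on $v^\perp$. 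Everything else is bookkeeping.
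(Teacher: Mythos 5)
Your proposal is correct, but there is nothing in the paper to compare it against: the paper states this theorem as a well-known fact (the spectral theorem for normal matrices) and gives no proof, using it only as a black box for the later results. Your argument is the standard textbook proof --- the computational direction showing \(DD^\dagger = D^\dagger D\) transfers to \(A\), then Schur triangularization combined with the entry-by-entry comparison of \((TT^\dagger)_{ii}\) and \((T^\dagger T)_{ii}\) forcing an upper-triangular normal matrix to be diagonal --- and it is complete modulo Schur's theorem, which you correctly identify as the real content and sketch how to establish. Your alternative direct induction (an eigenvector \(v\) of a normal \(A\) satisfies \(A^\dagger v = \bar\lambda v\), so \(v^\perp\) is invariant under both \(A\) and \(A^\dagger\)) is also sound, and has the mild advantage of avoiding triangular matrices entirely; either version would serve as a self-contained proof of the statement the paper leaves uncited.
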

\begin{definition}
    A matrix is called positive semi-definite if all of its eigenvalues are non-negative.
\end{definition}
Positive semi-definite matrices are normal and hence spectral decomposition applies to them. The following is a well-known result that 
\begin{theorem}\label{thm:positive}
For any matrix \(A\), the matrices \(A^\dagger A\) and \(AA^\dagger\) are positive semi-definite and have the same eigenvalue spectrum, i.e. there exist unitary matrices \(P\) and \(Q\) such that
\begin{align}
    A^\dagger A &= Q^\dagger D Q\\
    AA^\dagger  &= P D P^\dagger
\end{align}
where \(D\) is diagonal matrix of eigenvalues.
\end{theorem}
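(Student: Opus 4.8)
The plan is to prove three things in sequence: that both matrices are Hermitian, that they are positive semi-definite, and that their nonzero spectra agree with multiplicity. The decomposition into $Q^\dagger D Q$ and $P D P^\dagger$ then follows immediately from the spectral theorem stated above, since Hermitian matrices are normal.

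First I would verify Hermiticity directly: $(A^\dagger A)^\dagger = A^\dagger (A^\dagger)^\dagger = A^\dagger A$, and likewise $(AA^\dagger)^\dagger = AA^\dagger$. Positive semi-definiteness is equally routine, for any vector $v$ we have $v^\dagger A^\dagger A v = \norm{Av}^2 \ge 0$ and $v^\dagger AA^\dagger v = \norm{A^\dagger v}^2 \ge 0$, so every eigenvalue of each matrix is non-negative. Being Hermitian, both matrices are normal, so the earlier spectral theorem applies and yields unitaries $Q$ and $P$ with $A^\dagger A = Q^\dagger D_1 Q$ and $AA^\dagger = P D_2 P^\dagger$, where $D_1, D_2$ are diagonal with non-negative entries. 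It then remains only to show that $D_1$ and $D_2$ carry the same entries.

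The heart of the argument is a correspondence between nonzero eigenvalues. Suppose $A^\dagger A v = \lambda v$ with $\lambda \neq 0$ and $v \neq 0$. Applying $A$ on the left gives $AA^\dagger (Av) = \lambda (Av)$, so $Av$ is an eigenvector of $AA^\dagger$ for the same eigenvalue $\lambda$, provided $Av \neq 0$; and indeed $Av = 0$ would force $A^\dagger A v = 0 = \lambda v$, contradicting $\lambda \neq 0$ and $v \neq 0$. To upgrade this to an equality of multiplicities I would restrict the linear map $v \mapsto Av$ to the $\lambda$-eigenspace of $A^\dagger A$ and check that it is injective into the $\lambda$-eigenspace of $AA^\dagger$, injectivity following once more from $Av = 0 \Rightarrow \lambda v = 0$. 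This shows the multiplicity of $\lambda$ for $A^\dagger A$ does not exceed that for $AA^\dagger$; swapping the roles of $A$ and $A^\dagger$ gives the reverse inequality, so the two multiplicities coincide.

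I expect the main obstacle to be bookkeeping at the eigenvalue $0$ rather than anything conceptual. If $A$ is $m \times n$ with $m \neq n$, the two matrices have different sizes and hence differing multiplicities of $0$, so a literally common $D$ makes sense only after we agree to pad with zero eigenvalues up to a common dimension, or equivalently restrict attention to the nonzero (that is, the Schmidt-relevant) part of the spectrum. I would therefore state the conclusion for the nonzero spectrum, which the correspondence above settles exactly, and remark that the zero eigenvalues merely fill out each $D_i$ to the appropriate size; a permutation relabelling the diagonal entries then lets us take $D_1 = D_2 =: D$, completing the proof.
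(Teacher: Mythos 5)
Your proof is correct. The paper itself offers no proof of this theorem---it is stated as a well-known fact---so there is no internal argument to compare against; your write-up is the standard textbook argument that fills the gap. The three steps are sound: Hermiticity by direct computation, positive semi-definiteness from \(v^\dagger A^\dagger A v = \norm{Av}^2 \geq 0\) (which, combined with Hermiticity, bounds every eigenvalue below by zero), and the correspondence \(v \mapsto Av\) restricted to the \(\lambda\)-eigenspace, whose injectivity for \(\lambda \neq 0\) gives equality of nonzero spectra with multiplicity after the symmetric argument in \(A^\dagger\). Your final remark is more than bookkeeping: it exposes a genuine imprecision in the theorem as stated. When \(A\) is \(m \times n\) with \(m \neq n\), the matrices \(A^\dagger A\) and \(AA^\dagger\) have different sizes, so a literally common diagonal \(D\) cannot appear in both decompositions; only the nonzero spectra coincide, and the zero eigenvalue occurs with multiplicities differing by \(\modulus{m-n}\). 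This matters for the paper, since its later definitions explicitly apply the result to sets of \(m \times n\) matrices, where the diagonal factors \(D_i\) shared between \(P\)- and \(Q\)-decompositions should be read modulo this padding convention.
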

Singular value decomposition can be stated as follows
\begin{theorem}\cite{nielsen00}
    For any matrix \(A\), let \(P\) diagonalizes \(AA^\dagger\) and \(Q\) diagonalizes \(A^\dagger A\), then \(P^\dagger A Q^\dagger\) is diagonal.
\end{theorem}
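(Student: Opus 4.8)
The plan is to put $M = P^\dagger A Q^\dagger$ and show this single matrix is diagonal. The natural first move is to form the two Hermitian products of $M$ and feed in the diagonalizations guaranteed by Theorem~\ref{thm:positive}. Using unitarity of $P$ and $Q$ (so that $P^\dagger P = I$ and $Q^\dagger Q = I$), I would compute
\begin{align}
M^\dagger M &= Q A^\dagger P P^\dagger A Q^\dagger = Q (A^\dagger A) Q^\dagger = Q (Q^\dagger D Q) Q^\dagger = D,\\
M M^\dagger &= P^\dagger A Q^\dagger Q A^\dagger P = P^\dagger (A A^\dagger) P = P^\dagger (P D P^\dagger) P = D.
\end{align}
Hence $M^\dagger M = M M^\dagger = D$, so $M$ is normal and, moreover, $M^\dagger M$ is already the diagonal matrix $D$. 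This reduces the claim to a purely structural statement: a normal matrix whose Hermitian square is diagonal is itself diagonal, at least generically.

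To extract diagonality I would invoke the spectral theorem stated above: since $M$ is normal, $M = U^\dagger \Lambda U$ for some unitary $U$ and diagonal $\Lambda$. Then $M^\dagger M = U^\dagger (\Lambda^\dagger \Lambda) U = D$, i.e. $U D U^\dagger = \Lambda^\dagger \Lambda$, a unitary conjugation carrying the diagonal matrix $D$ to the diagonal matrix $\Lambda^\dagger \Lambda$. When the diagonal entries of $D$ are all distinct, any such conjugation must be realized by a monomial unitary $U$ (a permutation composed with phases), and conjugating $\Lambda$ by a monomial unitary returns a diagonal matrix; thus $M$ is diagonal.

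The hard part is exactly the degenerate case. When $D$ has repeated diagonal entries, $U$ may mix the corresponding coordinate directions, and then $M = U^\dagger \Lambda U$ need not be diagonal for an \emph{arbitrary} choice of the diagonalizers $P$ and $Q$; this is a genuine gap reflecting the non-uniqueness of the spectral decompositions in Theorem~\ref{thm:positive}. To close it I would not pick $P$ independently but build it out of $Q$. Set $B = A Q^\dagger$, so that $B^\dagger B = Q (A^\dagger A) Q^\dagger = D$; reading this entrywise shows the columns $b_i$ of $B$ are mutually orthogonal with $\norm{b_i}^2 = D_{ii}$. For each index with $D_{ii} > 0$ I would normalize $p_i = b_i / \sqrt{D_{ii}}$, obtaining orthonormal vectors, and extend them to an orthonormal basis to form a unitary $P$ with $P^\dagger B = \Sigma$, where $\Sigma$ is the diagonal matrix of entries $\sqrt{D_{ii}}$. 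Then $P^\dagger A Q^\dagger = P^\dagger B = \Sigma$ is diagonal, as required. Finally I would verify this constructed $P$ is consistent with Theorem~\ref{thm:positive}: from $A = P \Sigma Q$ one gets $A A^\dagger = P \Sigma \Sigma^\dagger P^\dagger = P D P^\dagger$, confirming that $P$ indeed diagonalizes $A A^\dagger$ and that the two diagonalizations can always be chosen compatibly.
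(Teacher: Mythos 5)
Your proof is essentially correct, but there is no proof in the paper to compare it against: the paper states this theorem as a cited fact from Nielsen and Chuang and proves nothing. The substantive comparison is therefore with the statement itself, and there your analysis is sharper than the paper's. As you observe, the claim as literally phrased --- that \emph{any} diagonalizers \(P\) of \(AA^\dagger\) and \(Q\) of \(A^\dagger A\) make \(P^\dagger A Q^\dagger\) diagonal --- is false in the degenerate case: for \(A = I\) every unitary diagonalizes \(AA^\dagger = A^\dagger A = I\), yet \(P^\dagger A Q^\dagger = P^\dagger Q^\dagger\) is an arbitrary unitary. Your two-stage argument handles this correctly. The computation \(M^\dagger M = MM^\dagger = D\) (which relies on the same-\(D\) convention of Theorem \ref{thm:positive}, so both orderings of the spectrum agree) plus normality settles the case of distinct eigenvalues, since a unitary conjugating one diagonal matrix with distinct entries onto another must be monomial. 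Your repair in the degenerate case --- building \(P\) out of \(Q\) by normalizing the mutually orthogonal columns of \(B = AQ^\dagger\) and extending to an orthonormal basis --- proves the correct \emph{existential} statement \(A = P\Sigma Q\), which is what the singular value decomposition actually asserts and what the rest of the paper needs. This is essentially an unpacked version of the standard textbook proof via polar decomposition (\(A = \sqrt{AA^\dagger}\,W\) with \(W\) unitary, then spectrally decompose the positive factor), so your route is a legitimate, more self-contained alternative.

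One remark worth retaining from your analysis: the compatibility issue you isolated does not disappear after this theorem. The paper's later results, e.g.\ Theorem \ref{thm:tripartite} and its algorithm, also quantify loosely over diagonalizing pairs \((P,Q)\), and the same non-uniqueness (permutation and phase freedom, and genuine unitary freedom inside degenerate eigenspaces) affects those statements; your observation that one must \emph{choose} the pair compatibly, rather than assume any pair works, is exactly the caveat those arguments need.
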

Usually, the spectral decomposition of two matrices \(A\) and \(B\) require different unitary matrices \(U\) and \(V\). When same unitary matrix diagonalizes matrices \(A\) and \(B\), we say \(A\) and \(B\) are \emph{simultaneously} diagonalizable.
\begin{theorem}\label{thm:commute}
    Two normal matrices are simultaneously diagonalizable if and only if they commute.
\end{theorem}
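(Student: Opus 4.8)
The plan is to prove both directions, with the forward direction (commuting $\Rightarrow$ simultaneously diagonalizable) being the substantive one. For the easy direction, suppose a single unitary $U$ satisfies $A = U^\dagger D_A U$ and $B = U^\dagger D_B U$ with $D_A, D_B$ diagonal. Since diagonal matrices always commute, $D_A D_B = D_B D_A$, and substituting together with $U U^\dagger = I$ gives $AB = U^\dagger D_A D_B U = U^\dagger D_B D_A U = BA$, a routine computation.

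For the forward direction, assume $AB = BA$ with both $A$ and $B$ normal. First I would invoke the spectral decomposition theorem for normal matrices to find a unitary $U$ diagonalizing $A$, so that $A' := U A U^\dagger = D$ is diagonal. Conjugation by $U$ preserves both the commutation relation and normality, so setting $B' := U B U^\dagger$ we have $D B' = B' D$ with $B'$ still normal. The key structural observation is that writing $D$ in terms of its \emph{distinct} eigenvalues and comparing entries of $DB'$ and $B'D$ forces $B'_{ij}(\lambda_i - \lambda_j) = 0$; hence $B'_{ij} = 0$ whenever the corresponding diagonal entries of $D$ differ, i.e. $B'$ is \emph{block diagonal} with blocks indexed by the eigenspaces of $D$.

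Next I would observe that a block-diagonal matrix is normal if and only if each of its diagonal blocks is normal, since $B' B'^\dagger$ and $B'^\dagger B'$ are themselves block diagonal with blocks $B'_k B'^\dagger_k$ and $B'^\dagger_k B'_k$. Therefore each block $B'_k$ is normal and, by the spectral theorem again, admits a unitary $V_k$ diagonalizing it. Assembling these into a block-diagonal unitary $V = \mathrm{diag}(V_1, \dots, V_r)$, the matrix $V B' V^\dagger$ is diagonal. Crucially, $V$ commutes with $D$: because $D$ is a scalar multiple of the identity on each block, $V D V^\dagger = D$ stays diagonal. Consequently the single unitary $W := V U$ satisfies both $W A W^\dagger = V D V^\dagger = D$ and $W B W^\dagger = V B' V^\dagger$ diagonal, exhibiting a common diagonalizing basis.

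The main obstacle is establishing that each diagonal block $B'_k$ is normal, which hinges on two facts to be handled carefully: that the relation $DB' = B'D$ pins down the block-\emph{diagonal} pattern (rather than merely block-triangular), and that normality descends to the blocks. Both follow from direct entrywise or block arithmetic, but it is worth emphasizing that this route avoids any appeal to the deduction $A^\dagger B = B A^\dagger$ precisely by diagonalizing $A$ first, so that invariance of the eigenspaces of $A$ under both $B$ and $B^\dagger$ becomes automatic from the block structure.
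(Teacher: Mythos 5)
The paper states this theorem without proof, treating it as a standard linear-algebra fact to be invoked later (it underpins Theorem~\ref{thm:tripartite} via the positively-commuting machinery), so there is no in-paper argument to compare yours against; the question is simply whether your proof is sound, and it is. The easy direction is the routine computation you give. The forward direction is the standard argument, executed correctly: diagonalize $A$ as $D = UAU^\dagger$, use $DB' = B'D$ entrywise to get $B'_{ij}(\lambda_i - \lambda_j) = 0$, conclude $B'$ is block diagonal along the eigenspaces of $D$, observe that normality descends to the blocks because $B'B'^\dagger$ and $B'^\dagger B'$ are block diagonal with blocks $B'_kB_k'^\dagger$ and $B_k'^\dagger B'_k$, diagonalize each block by the spectral theorem, and note that the assembled block unitary $V$ satisfies $VDV^\dagger = D$ since $D$ is scalar on each block. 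One small presentational point: the literal ``block diagonal'' description presumes equal eigenvalues of $D$ appear contiguously, which you should say can always be arranged (choose the spectral decomposition of $A$ with eigenvalues sorted, or conjugate by a permutation matrix); the entrywise vanishing statement itself is independent of ordering. Your closing remark is also well taken: by diagonalizing $A$ first, invariance of its eigenspaces under both $B'$ and $B'^\dagger$ falls out of the block structure for free, so no appeal to Fuglede-type results ($AB = BA \Rightarrow A^\dagger B = BA^\dagger$ for normal $A$) is needed.
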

We can extend the notion of commutation to a set of matrices.
\begin{definition}
    A set of matrices \(\mathcal{A}\) is said to commute if every pair of its elements commute.
\end{definition}

\begin{definition}
    A set \(\mathcal{A}\) of \(m\times n\) matrices are said to \emph{positively commute} if for each \(A_i\in \mathcal{A}\)
    \begin{enumerate}
        \item  \(A^\dagger_iA_i\) commute with each other
        \item \(A_iA^\dagger_i\) commute with each other
    \end{enumerate}
\end{definition}
The following lemma applies diagonalization to positively commuting set \(\mathcal{A}\) of matrices.
\begin{lemma}
    If \(\mathcal{A}\) commutes positively, then there exists unitary matrices \(P\) and \(Q\) such that for each \(A_i\in \mathcal{A}\)
    \begin{enumerate}
        \item \(A_i^\dagger A_i = Q^\dagger D_i Q\) and 
        \item \(A_iA_i^\dagger = P D_i P^\dagger\)
    \end{enumerate}
    where \(D_i\) are positive semi-definite and diagonal.
\end{lemma}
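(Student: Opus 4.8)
The plan is to handle the two families $\{A_i^\dagger A_i\}$ and $\{A_iA_i^\dagger\}$ separately, simultaneously diagonalize each, and then use singular value decomposition to force the two diagonal matrices attached to a given $A_i$ to coincide. First I would note that, by the definition of positive commutation, the matrices $A_i^\dagger A_i$ commute pairwise; each is positive semi-definite and hence normal, so Theorem~\ref{thm:commute} (extended from two matrices to a commuting set in the usual way, via induction on the joint eigenspaces) yields a single unitary $Q$ with $A_i^\dagger A_i = Q^\dagger D_i Q$ for every $i$. Here $D_i = Q A_i^\dagger A_i Q^\dagger$ is diagonal, and being a diagonalization of a positive semi-definite matrix it has non-negative entries. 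The identical argument applied to the commuting family $\{A_iA_i^\dagger\}$ produces a single unitary $P$ with $A_iA_i^\dagger = P E_i P^\dagger$, each $E_i$ diagonal and positive semi-definite.

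It remains to choose $P$ and $Q$ so that $E_i = D_i$ for all $i$. For this I would invoke the singular value decomposition theorem: since $P$ diagonalizes $A_iA_i^\dagger$ and $Q$ diagonalizes $A_i^\dagger A_i$, the matrix $\Sigma_i := P^\dagger A_i Q^\dagger$ is diagonal. Writing $A_i = P\Sigma_i Q$ and substituting gives
\begin{align}
    A_i^\dagger A_i &= Q^\dagger \Sigma_i^\dagger \Sigma_i\, Q, \\
    A_i A_i^\dagger &= P\, \Sigma_i \Sigma_i^\dagger\, P^\dagger,
\end{align}
so comparing with the diagonalizations above forces $D_i = \Sigma_i^\dagger\Sigma_i$ and $E_i = \Sigma_i\Sigma_i^\dagger$. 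Because $\Sigma_i$ is diagonal, $\Sigma_i^\dagger\Sigma_i$ and $\Sigma_i\Sigma_i^\dagger$ carry the same non-negative entries (padded by zeros to the appropriate $n\times n$ or $m\times m$ size), which is exactly the coincidence of nonzero spectra promised by Theorem~\ref{thm:positive}. Identifying this common diagonal as $D_i$ then delivers both claimed identities with the same $D_i$ and the same $P,Q$.

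The step I expect to be the main obstacle is the compatibility needed to apply the singular value decomposition theorem with one shared $P$ and one shared $Q$ for all $i$ at once. For a single matrix the theorem is clean, but simultaneous diagonalization only pins the common eigenbasis down up to unitary rotations inside each degenerate joint eigenspace, and an arbitrary choice there need not make $P^\dagger A_i Q^\dagger$ diagonal for every $i$. I would resolve this by refining the bases inside each joint eigenspace: for common eigenvectors $q_k$ of the $A_i^\dagger A_i$, the images $A_i q_k$ are eigenvectors of $A_iA_i^\dagger$ with eigenvalue $(D_i)_{kk}$ whenever $(D_i)_{kk}>0$, so these images determine the directions that $P$ must take. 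The positive commutation hypothesis is precisely what guarantees that all the families $\{A_iA_i^\dagger\}$ share a common eigenbasis into which these directions can be organized consistently across every $A_i$ simultaneously, rather than one matrix at a time; verifying that this alignment can always be achieved is the crux of the proof.
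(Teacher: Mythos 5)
Your opening steps---simultaneously diagonalizing the commuting family \(\{A_i^\dagger A_i\}\) to obtain \(Q, D_i\) and the family \(\{A_iA_i^\dagger\}\) to obtain \(P, E_i\)---are exactly what the paper's own proof does; the paper then simply writes the same symbol \(D_i\) for both diagonals and stops. You are right that identifying \(E_i\) with \(D_i\) is the real content, and right that it would follow if one shared pair \((P,Q)\) made every \(P^\dagger A_i Q^\dagger\) diagonal. But the claim on which your proposed fix rests---that positive commutation guarantees the image directions \(A_i q_k\) can be organized into a single unitary \(P\) consistently across all \(i\)---is not merely unverified, it is false, and with it the lemma as stated. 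Consider
\begin{align*}
    A_1 = \begin{pmatrix} 1 & 0 \\ 0 & 0 \end{pmatrix}, \qquad
    A_2 = \begin{pmatrix} 0 & 1 \\ 0 & 0 \end{pmatrix}.
\end{align*}
Then \(A_1^\dagger A_1 = \mathrm{diag}(1,0)\) and \(A_2^\dagger A_2 = \mathrm{diag}(0,1)\) commute, and \(A_1A_1^\dagger = A_2A_2^\dagger = \mathrm{diag}(1,0)\) commute, so this set positively commutes. Now suppose unitaries \(P, Q\) and diagonal matrices \(D_1, D_2\) satisfied both conclusions of the lemma. The \(P\)-equations give \(PD_1P^\dagger = A_1A_1^\dagger = A_2A_2^\dagger = PD_2P^\dagger\), hence \(D_1 = D_2\). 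The \(Q\)-equations give \(D_1D_2 = Q\,(A_1^\dagger A_1)(A_2^\dagger A_2)\,Q^\dagger = 0\). Together these force \(D_1^2 = 0\), i.e.\ \(D_1 = 0\), contradicting \(\mathrm{Tr}(D_1) = \mathrm{Tr}(A_1^\dagger A_1) = 1\). In the language of your construction: the common eigenbasis of the \(A_i^\dagger A_i\) is forced to be \(\{e_1, e_2\}\), and the nonzero-singular-value image directions are \(A_1e_1 = e_1\) and \(A_2e_2 = e_1\), so two \emph{different} columns of \(P\) would have to equal the same unit vector.

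So the obstacle you flagged as the crux is not a technical loose end that positive commutation resolves; it is exactly where the statement fails. Positive commutation constrains the left family and the right family separately, and nothing in it couples the slot in which an eigenvalue of \(A_i^\dagger A_i\) appears under \(Q\) to the slot in which the equal eigenvalue of \(A_iA_i^\dagger\) appears under \(P\), uniformly in \(i\). What the lemma actually requires is the strictly stronger hypothesis that \(\mathcal{A}\) admits a simultaneous singular value decomposition (a single pair \((P,Q)\) with every \(P^\dagger A_iQ^\dagger\) diagonal); under that hypothesis your computation \(D_i = \Sigma_i^\dagger\Sigma_i = \Sigma_i\Sigma_i^\dagger = E_i\) (for square \(A_i\)) goes through immediately, but the example above shows this hypothesis does not follow from positive commutation. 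To your credit, your proposal isolates precisely the point that the paper's proof passes over in silence; neither argument can close the gap, because under the stated hypothesis the gap cannot be closed.
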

\begin{proof}
    Since \(A_i^\dagger A_i\) is positive semi-definite, it admits spectral decomposition given by \begin{align*}
        A_i^\dagger A_i = Q^\dagger D_i Q
    \end{align*}
    where \(D_i\) is a diagonal matrix of non-negative eigenvalues.
    The same unitary matrices work for every \(A_j^\dagger A_j\) as they commute and we can apply Theorem \ref{thm:commute}. The above arguments hold for \(AA^\dagger\) analogously.
\end{proof}
\begin{definition}
    For a positively commuting \(\mathcal{A}\), we call the pair of unitary matrices \((P,Q)\) in the above lemma as diagonalizing pair.
\end{definition}
\begin{definition}
    A matrix \(S\) is called a \emph{scaled unitary} if \(S\) can be decomposed as \(S = \Lambda U\) where
    \begin{enumerate}
        \item \(U\) is unitary,
        \item \(\Lambda\) is positive semi-definite, diagonal and \(\tr{\Lambda^2} = 1\)
    \end{enumerate}
\end{definition}
From the definition, the following lemma follows immediately.
\begin{lemma}
    If \(S\) is scaled unitary, then  \(SS^\dagger\) is diagonal.
\end{lemma}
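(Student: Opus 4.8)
The plan is to expand the product $SS^\dagger$ directly from the defining decomposition and watch the unitary factor cancel. Writing $S = \Lambda U$ with $U$ unitary and $\Lambda$ positive semi-definite, diagonal, I would first compute the adjoint $S^\dagger = (\Lambda U)^\dagger = U^\dagger \Lambda^\dagger$, taking care to reverse the order of the two factors when transposing.

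Next I would form $SS^\dagger = \Lambda U U^\dagger \Lambda^\dagger$ and invoke unitarity of $U$, namely $U U^\dagger = I$, to collapse the two middle factors. This leaves $SS^\dagger = \Lambda \Lambda^\dagger$, so the problem reduces to understanding $\Lambda \Lambda^\dagger$.

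The only point requiring a moment's thought — the closest thing to an obstacle, though it is minor — is recognizing that a diagonal positive semi-definite matrix has real, nonnegative diagonal entries and is therefore self-adjoint, i.e.\ $\Lambda^\dagger = \Lambda$. Consequently $SS^\dagger = \Lambda^2$, and since the square of a diagonal matrix is again diagonal, the claim follows. I would note in passing that the normalization condition $\tr{\Lambda^2} = 1$ plays no role here; it merely ensures that $SS^\dagger$ has unit trace, a fact that becomes relevant only in the later constructions.
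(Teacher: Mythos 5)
Your proof is correct and matches the paper's treatment: the paper states this lemma follows immediately from the definition, and your direct expansion $SS^\dagger = \Lambda U U^\dagger \Lambda^\dagger = \Lambda\Lambda^\dagger = \Lambda^2$ is exactly that immediate argument. Your observation that $\Lambda^\dagger = \Lambda$ because a diagonal positive semi-definite matrix is self-adjoint is the right (and only) subtlety to flag.
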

\begin{definition}
    Given a tripartite state \(\ket{\psi} = \sum_{ijk} a_{ijk}\ket{i_A}\ket{j_B}\ket{k_C}\), we define \emph{matrix set} \(\mathcal{A}\) of \(\ket{\psi}\) as the collection of matrices \(A_i = (a_i)_{jk}\), i.e. we fix index \(i\) and vary indices \(j, k\) to obtain the elements of \(A_i\).
\end{definition}
\begin{definition}
    For any square matrix $A$, $diag(A)$ represents a column with entries from the diagonals of $A$, i.e. $diag(A)_i = A_{ii}$. 
\end{definition}
\begin{theorem}\label{thm:tripartite}
    A tripartite state \(\ket{\psi}\) is Schmidt decomposable if and only if
    \begin{enumerate}
        \item the matrix set \(\mathcal{A}\) commutes positively, and
        \item the matrix \(S = [diag(P^\dagger A_i Q^\dagger)]\) is scaled unitary.
    \end{enumerate}
\end{theorem}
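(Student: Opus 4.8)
The plan is to read Schmidt decomposability as the statement that the matrix set $\mathcal{A}$ can be brought, by local unitaries, into the fully diagonal form $\tilde A_\ell = \lambda_\ell \outpro{\ell_B}{\ell_C}$, and to track how each type of local unitary acts on $\mathcal{A}$. A local unitary on $A$ replaces the matrices by an invertible unitary recombination $A_i \mapsto \sum_{i'} c_{ii'} A_{i'}$, a local unitary on $B$ acts by left multiplication $A_i \mapsto V A_i$, and a local unitary on $C$ acts by right multiplication $A_i \mapsto A_i W$. Thus a general local unitary sends $A_i \mapsto \sum_{i'} c_{ii'} V A_{i'} W$, and the theorem asserts that such a transformation can reach the fully diagonal form exactly when the two stated conditions hold. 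The two conditions correspond to the two stages of the reduction: condition $1$, together with the diagonalizing pair of the preceding lemma, handles the simultaneous reduction of the $B$ and $C$ factors, while condition $2$ handles the final recombination on $A$.

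For sufficiency I would argue constructively. Assuming $\mathcal{A}$ commutes positively, the lemma supplies a diagonalizing pair $(P,Q)$ with $A_i^\dagger A_i = Q^\dagger D_i Q$ and $A_i A_i^\dagger = P D_i P^\dagger$ for every $i$; since $P$ diagonalizes each $A_iA_i^\dagger$ and $Q$ diagonalizes each $A_i^\dagger A_i$, the singular value decomposition theorem makes every $\tilde A_i = P^\dagger A_i Q^\dagger$ diagonal. Applying $P^\dagger$ on $B$ and $Q^\dagger$ on $C$ therefore simultaneously diagonalizes all the $A_i$, so the state collapses to $\sum_{i,\ell} (\tilde A_i)_{\ell\ell}\,\ket{i_A}\ket{\ell_B}\ket{\ell_C}$, in which the $B$ and $C$ labels are already locked together. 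Collecting the diagonals $(\tilde A_i)_{\ell\ell}$ as the entries of $S$, the residual freedom is a single local unitary on $A$, i.e.\ right multiplication of $S$ by a unitary; because $S$ is scaled unitary, $S = \Lambda U_0$, a suitable local unitary on $A$ right-multiplies $S$ by $U_0^\dagger$ and sends it to $\Lambda$, producing $\sum_\ell \lambda_\ell \ket{\ell_A}\ket{\ell_B}\ket{\ell_C}$, and $\tr{\Lambda^2} = 1$ is exactly the normalization $\sum_\ell \lambda_\ell^2 = 1$. This is the Schmidt form.

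For necessity I would start from the Schmidt form, where $A_\ell = \lambda_\ell \outpro{\ell_B}{\ell_C}$, so that the matrices have disjoint supports and $A_i^\dagger A_j = A_i A_j^\dagger = 0$ for $i \neq j$; here positive commutation is immediate and $S = \Lambda$ is visibly scaled unitary. The real content is to show these properties survive passage to the original basis along the local-unitary orbit. Under $B$ and $C$ unitaries the quantities $A_i^\dagger A_i$ and $A_iA_i^\dagger$ transform by a common conjugation, so commutation is preserved. The delicate step is the recombination on $A$: a generic recombination does not preserve positive commutation, and one must use the special structure of the Schmidt form. Writing the $B,C$-rotated matrices as $\lambda_i u_i v_i^T$ with $\{u_i\}$ and $\{v_i\}$ orthonormal, a recombination $\hat A_m = \sum_i c_{mi}\lambda_i u_i v_i^T$ gives $\hat A_m^\dagger \hat A_m = \sum_i \modulus{c_{mi}}^2 \lambda_i^2\, v_i^* v_i^T$, which is diagonalized by a single unitary independent of $m$; hence all $\hat A_m^\dagger \hat A_m$ commute, and symmetrically for $\hat A_m \hat A_m^\dagger$. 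This establishes condition $1$ in the original basis and simultaneously identifies $(P,Q)$ with the $B,C$ parts of the reducing unitary, so that forming $S$ returns the recombination data and $S$ inherits the scaled-unitary property from unitarity of $c$.

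The step I expect to be the main obstacle is precisely this necessity argument: proving that positive commutation is forced, i.e.\ that a single diagonalizing pair simultaneously singular-value-diagonalizes every $A_i$, since this fails for arbitrary recombinations and must be rescued from the rank-one, disjoint-support structure of the Schmidt form. A secondary difficulty is the gauge ambiguity in the diagonalizing pair $(P,Q)$: eigenvalue degeneracies and phases make $(P,Q)$, and hence $S$, non-unique, so I would either fix a canonical choice of $(P,Q)$ or check that the scaled-unitary property of $S$ is invariant under this residual freedom. Rectangular blocks (when $\dim{\mathcal{H}_B} \neq \dim{\mathcal{H}_C}$) require the usual bookkeeping for non-square singular value decompositions but introduce no essential new difficulty.
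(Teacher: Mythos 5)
Your proposal is correct and takes essentially the same route as the paper: necessity by expanding the Schmidt form so that each original-basis matrix becomes a unitary recombination \(A_i = \sum_\ell d^i_{\ell\ell}\, p_\ell q_\ell^T\) of rank-one matrices (whence all \(A_i^\dagger A_i\) and all \(A_iA_i^\dagger\) share a common eigenbasis and \(S = \Lambda V\) is scaled unitary), and sufficiency by simultaneously diagonalizing the set via the pair \((P,Q)\) and then using the factorization \(S = \Lambda U_0\) to produce orthonormal \(A\)-vectors and the Schmidt coefficients. The degeneracy/gauge ambiguity in \((P,Q)\) that you flag is a genuine subtlety, and the paper's proof passes over it silently, so your treatment is if anything slightly more careful on that point.
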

\begin{proof}
    Suppose \(\ket{\psi}\) is Schmidt decomposable. We'll start with the Schmidt decomposition and show that the matrix set \(\mathcal{A}\) commutes positively and \(S\) is scaled unitary. We'll use the unitary relations \(V, P, Q\) between Schmidt bases \(\ket{\ell}\) and starting basis \(\ket{i}\), where \(\Lambda\) is diagonal matrix of coefficients \(\lambda\).
    \begin{align*}
        \ket{\psi} &= \sum_\ell \lambda_\ell \ket{\ell_A} \ket{\ell_B} \ket{\ell_C}\\
                   &= \sum_\ell \left (\Lambda V\ket{i^\ell_A}\right )\ket{\ell_B} \ket{\ell_C}\\
                   &= \sum_\ell \left (\sum_i d^i_{\ell\ell}\ket{i_A}\right )\ket{\ell_B} \ket{\ell_C}\\
                   &= \sum_i \ket{i_A} \sum_\ell d^i_{\ell\ell}\left (\sum_j P_{j\ell}\ket{j_B}\right ) \left (\sum_k Q_{k\ell}\ket{k_C}\right )\\
                   &= \sum_{ijk} \left (\sum_\ell P_{j\ell}d^i_{\ell\ell}Q_{\ell k}\right )\ket{i_A}\ket{j_B}\ket{k_C}\\
                   &= \sum_{ijk}a_{ijk}\ket{i_A}\ket{j_B}\ket{k_C}
    \end{align*}
    The matrices \(A_i\) in \(\mathcal{A}\) are defined as 
    \begin{align}
        A_i = PD_iQ
    \end{align}
    which gives 
    \begin{align}
        A^\dagger_i A_i &= Q^\dagger \modulus{D_i}^2 Q\\
        A_iA^\dagger_i  &= P\modulus{D_i}^2 P^\dagger
    \end{align}
    This implies that each of \(A_i^\dagger A_i\) are diagonalized by \(Q\) and each of \(A_iA_i^\dagger\) are diagonalized by \(P\). By Theorem \ref{thm:commute}, they must commute with each other. Hence, \(\mathcal{A}\) commutes positively. In addition, we define 
    \begin{align}
        S &= \Lambda V\\
          &= [diag(P^\dagger A_i Q^\dagger)]
    \end{align}
    which is scaled unitary. Notice that the normalization condition implies that \(\tr{\Lambda^2} = 1\).

    For the other direction with \(\mathcal{A}\) commuting positively and \(S\) being a scaled unitary, we'll construct the Schmidt decomposition of \(\ket{\psi}\).
    \begin{align*}
        \ket{\psi} &= \sum_{ijk}a_{ijk}\ket{i_A}\ket{j_B}\ket{k_C}\\
                   &= \sum_i \ket{i_A} \sum_{jk} a_{ijk}\ket{j_B}\ket{k_C}\\
                   &= \sum_i \ket{i_A} \sum_{jk} \sum_\ell P_{j\ell} d^i_{\ell\ell} Q_{k\ell} \ket{j_B} \ket{k_C}\\
                   &= \sum_i \ket{i_A} \sum_\ell d^i_{\ell\ell}\left (\sum_{j} P_{j\ell}\ket{j_B}\right )\left (\sum_{k} Q_{k\ell}\ket{k_C}\right )\\
                   &= \sum_\ell \left ( \sum_i d^i_{\ell\ell}\ket{i_A}\right )\ket{\ell_B}\ket{\ell_C}\\
                   &= \sum_\ell \lambda_\ell \ket{\ell_A}\ket{\ell_B}\ket{\ell_C}
    \end{align*}
    Notice that the elements \(d^i_{\ell\ell}\) form the diagonal of decomposition of \(A_i = PD_iQ\). Therefore, the matrix columns of matrix \(S\) formed by elements \(d^i_{\ell\ell}\) are formed by taking diagonal of \(P^\dagger A_i Q^\dagger\) as vector. Since \(S\) is given to be scaled unitary, it has decomposition of the form \(\Lambda V\) where \(\Lambda\) is diagonal and \(V\) is unitary. This provides us the required Schmidt decomposition. 
\end{proof}
\subsection{Algorithm}
Theorem \ref{thm:tripartite} can be used to obtain Schmidt decomposition of a decomposable state efficiently. As the matrix set is diagonalized by matrices \((P, Q)\), we only need to look for an appropriate \(A_i\) to obtain them. Using \(P, Q\), we obtain \(S\). Here are the steps:
\begin{enumerate}
    \item Consider matrices \(A_i\), and compute \(L_i = A_iA_i^\dagger\) and \(M_i = A_i^\dagger A_i\).
    \item Choose random real numbers \(r_i\) and compute \(L = \sum_i r_i L_i\) and \(M = \sum_i r_i M_i\). This random linear combination is to remove degeneracies. 
    \item Compute spectral decompositions of \(L\) and \(M\) to obtain diagonalizing pairs of unitary matrices \(P, Q\).
    \item Compute the diagonal matrices \(D_i = P^\dagger A_i Q^\dagger\). If \(D_i\) are not diagonal, terminate.
    \item Compute the matrix \(S = [diag(D_i)]\).
    \item Compute \(SS^\dagger\), the square root of the diagonal elements gives \(\lambda_\ell\).
    \item Divide the rows of \(S\) by the non-zero \(\lambda_\ell\) which gives the unitary matrix \(V\). If \(V\) is not unitary, terminate.
\end{enumerate}
The correctness of the above algorithm follows from the correctness of Theorem \ref{thm:tripartite}. Since the running time of each step in the algorithm is polynomial in the size of the input, the above algorithm terminates in polynomial time.

\section{Schmidt decomposition of quadripartite states}
The above scheme for tripartite states can be extended to quadripartite states with minor changes as will be shown in this section. To start with, the matrix set \(\mathcal{A}\) needs to account for an extra index.
\begin{definition}
    A set \(\mathcal{D}\) of square matrices is called \emph{unit decomposable} if each matrix \(D_k\in \mathcal{D}\) satisfies following conditions
    \begin{enumerate}
        \item rank of \(D_k\) is 1,
        \item \(D_k\) can be decomposed as \(D_k = \lambda_k u_k v_k^T\) with column vectors $u_k,v_k$ such that \(\lambda_k \geq 0\), \(U = [u_k]\) and \(V = [v_k]\) are unitary.
    \end{enumerate}
\end{definition}
We extend the notion of matrix set defined above to quadripartite states as follows.
\begin{definition}
    Given a quadripartite state \(\ket{\psi} = \sum_{lmno} a_{lmno}\ket{l_A}\ket{m_B}\ket{n_C}\ket{o_D}\), we define \emph{matrix set} \(\mathcal{A}\) of \(\ket{\psi}\) as the collection of matrices \(A^{lm} = (a^{lm})_{no}\), i.e. fix two indices \(l,m\) and vary other indices \(n,o\).
\end{definition}
\begin{theorem}\label{thm:quadripartite}
    A quadripartite state \(\ket{\psi}\) is Schmidt decomposable if and only if
    \begin{enumerate}
        \item the matrix set \(\mathcal{A}\) commutes positively, and
        \item the matrix set \(\mathcal{D}\) with matrices \(D_k = [diag(P^\dagger A^{lm} Q^\dagger)_k]\) is unit decomposable, where \(D_k\) is a matrix formed by taking \(k\)th element of each of vectors \(diag(P^\dagger A^{lm} Q^\dagger)\).
    \end{enumerate}
\end{theorem}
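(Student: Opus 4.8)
The plan is to follow the proof of Theorem~\ref{thm:tripartite} closely, letting the index pair $(n,o)$ of systems $C,D$ play the role of the single matrix of the tripartite case, and letting the new pair $(l,m)$ of systems $A,B$ be carried by the family $\mathcal{D}$. For the forward direction I would start from an assumed Schmidt decomposition $\ket{\psi}=\sum_k\lambda_k\ket{k_A}\ket{k_B}\ket{k_C}\ket{k_D}$ and expand each Schmidt vector in the starting basis through unitaries $U,V,P,Q$, writing $\ket{k_A}=\sum_l U_{lk}\ket{l_A}$, $\ket{k_B}=\sum_m V_{mk}\ket{m_B}$, and similarly for $\ket{k_C},\ket{k_D}$ via $P,Q$. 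Collecting terms yields $A^{lm}=PD^{lm}Q$, where $D^{lm}$ is diagonal with $k$-th entry $d^{lm}_{kk}=\lambda_k U_{lk}V_{mk}$. Exactly as in the tripartite proof this gives $A^{lm\dagger}A^{lm}=Q^\dagger\modulus{D^{lm}}^2 Q$ and $A^{lm}A^{lm\dagger}=P\modulus{D^{lm}}^2 P^\dagger$, so the single pair $(P,Q)$ diagonalizes every member of $\mathcal{A}$; Theorem~\ref{thm:commute} then forces $\mathcal{A}$ to commute positively, which is condition~1.

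For condition~2 I would observe that the $k$-th entry of $diag(P^\dagger A^{lm}Q^\dagger)$ is exactly $d^{lm}_{kk}=\lambda_k U_{lk}V_{mk}$, so the matrix $D_k$ has entries $(D_k)_{lm}=\lambda_k U_{lk}V_{mk}$. This is precisely the outer product $D_k=\lambda_k u_k v_k^T$, where $u_k$ and $v_k$ are the $k$-th columns of $U$ and $V$; each $D_k$ is rank one, and since $U=[u_k]$ and $V=[v_k]$ are unitary, $\mathcal{D}$ is unit decomposable.

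For the converse I would reverse each step. Positive commutation of $\mathcal{A}$ supplies a diagonalizing pair $(P,Q)$ by the diagonalizing-pair lemma, and by the singular value decomposition theorem applied with this shared pair each $P^\dagger A^{lm}Q^\dagger$ is diagonal, so $A^{lm}=PD^{lm}Q$ with $D^{lm}=P^\dagger A^{lm}Q^\dagger$. Unit decomposability of $\mathcal{D}$ then furnishes unitaries $U=[u_k]$, $V=[v_k]$ and scalars $\lambda_k\ge 0$ with $(D_k)_{lm}=\lambda_k (u_k)_l (v_k)_m$, i.e. $d^{lm}_{kk}=\lambda_k U_{lk}V_{mk}$. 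Substituting these back and regrouping the four sums in the order $A,B,C,D$ reassembles $\ket{\psi}=\sum_k\lambda_k\ket{k_A}\ket{k_B}\ket{k_C}\ket{k_D}$, with the four Schmidt families read off from the columns of $U,V,P,Q$; their orthonormality is exactly the unitarity of these matrices.

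The main obstacle is the same delicate point underlying the tripartite argument: I must guarantee that the single pair $(P,Q)$ obtained from positive commutation simultaneously diagonalizes every $A^{lm}$, not merely the Hermitian products $A^{lm\dagger}A^{lm}$ and $A^{lm}A^{lm\dagger}$. Consistently aligning eigenvector phases and resolving degenerate singular subspaces across the whole family is where the real work lies; once $P^\dagger A^{lm}Q^\dagger$ is genuinely diagonal for all $(l,m)$, the remainder is bookkeeping. A secondary point worth checking is that the rank-one outer-product form of each $D_k$ really does force the Schmidt vectors on $A$ and $B$ to be orthonormal across different $k$---this is supplied precisely by the unitarity clause in the definition of unit decomposable.
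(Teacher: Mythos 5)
Your proof is correct and follows essentially the same route as the paper's: the forward direction expands the Schmidt form to derive \(A^{lm}=PD^{lm}Q\) with \(d^{lm}_{kk}=\lambda_k U_{lk}V_{mk}\) and reads off positive commutation and the rank-one outer-product structure of each \(D_k\), while the converse reverses the computation using the diagonalizing pair and unit decomposability. You are in fact more explicit than the paper on both the verification of condition~2 and the one delicate point you flag --- that positive commutation must yield a single pair \((P,Q)\) rendering every \(P^\dagger A^{lm}Q^\dagger\) diagonal, despite degeneracies and phase freedom --- a step the paper's proof silently assumes.
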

\begin{proof}
    Suppose \(\ket{\psi}\) is Schmidt decomposable.
    \begin{align*}
        \ket{\psi} &= \sum_k \lambda_k \ket{k_A} \ket{k_B} \ket{k_C}\ket{k_D}\\
                   &= \sum_k \lambda_k \left ( \sum_l u^A_{lk}\ket{l_A} \right )  \left( \sum_m u^B_{mk}\ket{m_B}  \right) \ket{k_C} \ket{k_D}\\
                   &= \sum_k \left ( \sum_{l,m} \lambda_k u^A_{lk}u^B_{mk}\ket{l_A} \ket{m_B}  \right) \ket{k_C} \ket{k_D}\\
                   &= \sum_k \left ( \sum_{l,m} d^{lm}_{kk}\ket{l_A} \ket{m_B}  \right) \ket{k_C} \ket{k_D}\\
                   &= \sum_{l,m} \ket{l_A} \ket{m_B} \sum_k d^{lm}_{kk} \ket{k_C} \ket{k_D}\\
                   &= \sum_{l,m} \ket{l_A} \ket{m_B} \sum_k \sum_{n, o} P_{nk} d^{lm}_{kk} Q_{ko} \ket{n_C} \ket{o_D}\\
                   &= \sum_{l,m} \ket{l_A} \ket{m_B} \sum_{n, o} a_{lmno} \ket{n_C} \ket{o_D}
    \end{align*}
    Hence, we have (1) the matrix set \(\mathcal{A}\) commutes positively, and (2) the matrix set \(\mathcal{D}\) with matrices \(D_k = [diag(P^\dagger A^{lm} Q^\dagger)_k]\) is unit decomposable, where \(D_k\) is a matrix formed by taking \(k\)th element of each of vectors \(diag(P^\dagger A^{lm} Q^\dagger)\).
    
    For the other direction we assume that \(\mathcal{A}\) commutes positively and \(D_k\) are unit decomposable. We construct the Schmidt decomposition of the state \(\ket{\psi}\). The proof effectively runs backwards as above.
    \begin{align*}
        \ket{\psi} &= \sum_{l,m,n, o} a_{lmno} \ket{l_A} \ket{m_B}  \ket{n_C} \ket{o_D}\\
        &= \sum_{l,m} \ket{l_A} \ket{m_B} \sum_{n, o} a_{lmno} \ket{n_C} \ket{o_D}\\
        &= \sum_{l,m} \ket{l_A} \ket{m_B} \sum_k \sum_{n, o} P_{nk} d^{lm}_{kk} Q_{ko} \ket{n_C} \ket{o_D}\\ 
        &= \sum_{l,m} \ket{l_A} \ket{m_B} \sum_k d^{lm}_{kk} \ket{k_C} \ket{k_D}\\
        &= \sum_k \left ( \sum_{l,m} d^{lm}_{kk}\ket{l_A} \ket{m_B}  \right) \ket{k_C} \ket{k_D}\\
        &= \sum_k \left ( \sum_{l,m} \lambda_k u^A_{lk}u^B_{mk}\ket{l_A} \ket{m_B}  \right) \ket{k_C} \ket{k_D}\\
        &= \sum_k \lambda_k \left ( \sum_l u^A_{lk}\ket{l_A} \right )  \left( \sum_m u^B_{mk}\ket{m_B}  \right) \ket{k_C} \ket{k_D}\\
        &= \sum_k \lambda_k \ket{k_A} \ket{k_B} \ket{k_C}\ket{k_D}
    \end{align*}
    Decomposition of \(a_{lmno}\) has been according to \(\mathcal{A}\). The matrices defined by \(d^{lm}_{kk}\) considering the diagonals of \(P^\dagger A^{lm} Q^\dagger\) as vectors and used as columns, i.e \(D_k = [diag(P^\dagger A^{lm}Q^\dagger)_k]\).
\end{proof}
\subsection{Algorithm}
Theorem \ref{thm:quadripartite} can be used to obtain Schmidt decomposition of a decomposable state efficiently. As the matrix set is diagonalized by matrices \((P, Q)\), we only need to look for an appropriate \(A^{lm}\) to obtain them. Using \(P, Q\), we obtain matrices \(D_k\). Here are the steps:
\begin{enumerate}
    \item Consider any matrix \(A^{lm}\), possibly with least number of zeroes and compute \(L^{lm} = A^{lm}A^{lm\dagger}\) and \(M^{lm} = A^{lm\dagger} A^{lm}\).
    \item Compute spectral decompositions of random linear combinations of \(L^{lm}\) and \(M^{lm}\) to obtain diagonalizing pairs of unitary matrices \(P, Q\).
    \item Compute the diagonal matrices \(S^{lm} = P^\dagger A^{lm} Q^\dagger\). If any of them is not diagonal, terminate.
    \item Compute the matrix \(D_k = [diag(S^{lm})_k]\).
    \item Compute rank 1 decompositions of \(D_k = \lambda_k u^A_k u^{BT}_k\). If any one of $D_k$ are not rank 1, terminate.
    \item Output \(\lambda_k, U^A = [u^A_k], U^B = [u^B_k], P\) and \(Q\), that can be used to construct the Schmidt decomposition.
\end{enumerate}
The correctness of the above algorithm follows from the correctness of Theorem \ref{thm:quadripartite}. Since the running time of each step in the algorithm is polynomial in the size of the input, the above algorithm terminates in polynomial time.

\section{Schmidt decomposition of multipartite states}
In this section, we generalize the schemes for tripartite and quadripartite states to general multipartite states.
\begin{definition}
    A family \(\mathcal{M}\) of \(N\) positively commuting sets \(\mathcal{A}_i\) of matrices is called \emph{central} if the diagonalizing pairs \((P_i, Q_i)\) for each set \(\mathcal{A}_i\) is of the form \((P_i, Q_{N})\).
\end{definition}
\begin{definition}
    For given set of indices \(\{i_1, i_2, ..., i_n\}\), define \(S_{i_ji_k}\) as set of indices \(\{i_1, i_2, ..., i_n\} - \{i_j, i_k\}\).
    Given a set of numbers \(a_{i_1i_2...i_n}\), define a matrix \(A^{S_{i_ji_k}}\)  by fixing the indices \(S_{i_ji_k}\) and only varying \(i_j\) and \(i_k\). 
\end{definition}
\begin{definition}
    Given a multipartite state \(\ket{\psi} = \sum_{i_1i_2...i_n} a_{i_1i_2...i_n}\ket{i_1^{A_1}}\ket{i_2^{A_2}}\cdots\ket{i_n^{A_n}}\), we define \emph{matrix set} of \(\ket{\psi}\) as the family \(\mathcal{M}\) of collection \(\mathcal{A}_\ell\) of matrices \(A^{S_{i_\ell, i_n}}\).
\end{definition}
\begin{theorem}\label{thm:main}
    A multipartite state \(\ket{\psi}\) is Schmidt decomposable if and only if its matrix set \(\mathcal{M}\) is central.
\end{theorem}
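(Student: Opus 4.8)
The plan is to prove both implications by making explicit the correspondence between a Schmidt decomposition and a family of local unitaries that simultaneously diagonalize the coefficient tensor. Writing the Schmidt vectors in the computational basis as $\ket{\ell_{A_p}} = \sum_{i_p} U^{(p)}_{i_p\ell}\ket{i_p^{A_p}}$ with each $U^{(p)}$ unitary, the existence of a Schmidt decomposition is equivalent to the statement $a_{i_1\cdots i_n} = \sum_\ell \lambda_\ell \prod_{p=1}^n U^{(p)}_{i_p\ell}$ with $\lambda_\ell\geq 0$. I would show that centrality of $\mathcal{M}$ is precisely the condition guaranteeing that such unitaries exist, mirroring the structure already established for the tripartite and quadripartite cases.

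For the forward direction I would assume the Schmidt form and fix any pair $(\ell,n)$ with $\ell<n$ together with a choice of the remaining indices $S$. Substituting the decomposition gives $A^{S_{i_\ell,i_n}} = U^{(\ell)} D^{S} (U^{(n)})^T$, where $D^{S}$ is diagonal with $k$-th entry $\lambda_k\prod_{p\neq \ell,n} U^{(p)}_{i_p k}$. A direct computation then yields $(A^{S})^\dagger A^{S} = \overline{U^{(n)}}\,|D^S|^2\,(U^{(n)})^T$ and $A^{S}(A^{S})^\dagger = U^{(\ell)}\,|D^S|^2\,(U^{(\ell)})^\dagger$. Hence every $A^\dagger A$ in $\mathcal{A}_\ell$ is diagonalized by the single unitary $(U^{(n)})^T$ and every $AA^\dagger$ by $U^{(\ell)}$, so by Theorem~\ref{thm:commute} the members of each family pairwise commute and $\mathcal{A}_\ell$ commutes positively with diagonalizing pair $(U^{(\ell)}, (U^{(n)})^T)$. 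Since the second coordinate $(U^{(n)})^T$ is independent of $\ell$, the family $\mathcal{M}$ is central with $Q_N = (U^{(n)})^T$.

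For the converse I would start from a central $\mathcal{M}$, so each $\mathcal{A}_\ell$ commutes positively with diagonalizing pair $(P_\ell, Q_N)$ sharing the common $Q_N$. Applying the singular value decomposition theorem to each $A^{S_{i_\ell,i_n}}$—whose $A^\dagger A$ is diagonalized by $Q_N$ and whose $AA^\dagger$ is diagonalized by $P_\ell$—gives that $P_\ell^\dagger A^{S} Q_N^\dagger$ is diagonal for every $S$. In tensor language, transforming index $\ell$ by $P_\ell^\dagger$ and index $n$ by $Q_N^\dagger$ while leaving the other indices in the computational basis produces coefficients that vanish unless the two transformed indices coincide. Now set $\tilde a = (P_1^\dagger\otimes\cdots\otimes P_{n-1}^\dagger\otimes \overline{Q_N})\,a$. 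Because the partially transformed tensor vanishes for \emph{every} fixed value of the untouched indices whenever $k_\ell\neq k_n$, the same holds after the remaining linear combinations, so $\tilde a_{k_1\cdots k_n}=0$ unless $k_\ell = k_n$; intersecting over all $\ell\in\{1,\dots,n-1\}$ forces support only on the main diagonal $k_1=\cdots=k_n$. Writing $\tilde a_{k\cdots k}=\lambda_k e^{i\theta_k}$ and absorbing $e^{i\theta_k}$ into the $k$-th column of $P_1$ (which preserves unitarity) gives $\lambda_k\geq 0$ and, upon inverting the transforms, the required Schmidt decomposition.

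The hard part is the single step in the converse that upgrades simultaneous diagonalization of the Hermitian forms $A^\dagger A$ and $AA^\dagger$ to the diagonalization of each $A^{S}$ itself. This is the content of the singular value decomposition theorem, but it is delicate under degeneracy: an arbitrary unitary diagonalizing $A^\dagger A$ need not be \emph{matched} to the one diagonalizing $AA^\dagger$, so one must argue that the diagonalizing pair of $\mathcal{A}_\ell$ can be chosen with matched left and right singular vectors and that the single shared choice $Q_N$ survives this matching across all $\ell$ at once. Once this is secured, the collapse from pairwise diagonality to full diagonality and the phase absorption are routine, and every step is manifestly polynomial-time, matching the algorithms of the preceding sections.
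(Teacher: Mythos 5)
Your forward direction coincides with the paper's (you are, if anything, more careful with the transposes and conjugates than the paper is), and your post-gap mechanics in the converse — transform all indices at once, intersect the pairwise diagonal supports over \(\ell\), absorb phases into a column of \(P_1\) — are cleaner than the paper's corresponding step, which merely asserts that the quantities \(z^{S_{i_n}}_\ell\) "can be factored." But the converse has a genuine gap, and it is exactly the step you flag and then defer: upgrading "\(Q_N\) diagonalizes every \((A^{S})^\dagger A^{S}\) and \(P_\ell\) diagonalizes every \(A^{S}(A^{S})^\dagger\)" (which is all that the paper's definition of a diagonalizing pair, and hence of a central family, asserts) to "\(P_\ell^\dagger A^{S} Q_N^\dagger\) is diagonal." You write that once this matching is secured the rest is routine, but you never secure it, and in fact it cannot be secured: centrality in the paper's sense does not imply that matched pairs exist. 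Concretely, take \(\ket{W}=\frac{1}{\sqrt{3}}(\ket{001}+\ket{010}+\ket{100})\). Its two matrix sets \(\mathcal{A}_1,\mathcal{A}_2\) each consist of the matrices \(\frac{1}{\sqrt{3}}X\) and \(\frac{1}{\sqrt{3}}\,\mathrm{diag}(1,0)\), where \(X=\left(\begin{smallmatrix}0&1\\1&0\end{smallmatrix}\right)\); all the Gram matrices \(A^\dagger A\) and \(AA^\dagger\) equal \(\frac{1}{3}I\) or \(\frac{1}{3}\mathrm{diag}(1,0)\), which are diagonal and pairwise commuting, so both sets commute positively with the common diagonalizing pair \((I,I)\) and \(\mathcal{M}\) is central. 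Yet \(\ket{W}\) is not Schmidt decomposable (the paper itself records this in an Observation in Section 5), and no pair \((P,Q)\) makes both \(P^\dagger X Q^\dagger\) and \(P^\dagger \mathrm{diag}(1,0)\, Q^\dagger\) diagonal: if both were diagonal, their product \(P^\dagger X\,\mathrm{diag}(1,0)\,P\) would be a diagonal — hence zero — nilpotent matrix, while \(X\,\mathrm{diag}(1,0)\neq 0\). So the matching you need is an extra hypothesis, the multipartite analogue of condition (2) ("scaled unitary" / "unit decomposable") in Theorems \ref{thm:tripartite} and \ref{thm:quadripartite}, and it is absent from the definition of central.

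You should also know that the paper's own proof of the converse silently commits the same step: it writes \(a_{i_1i_2\ldots i_n}=\sum_\ell u^{A_k}_{i_k\ell}d^{S_{i_k,i_n}}_{\ell\ell}u^{A_n}_{\ell i_n}\) directly from centrality, which is precisely the unjustified matched-SVD claim, and the underlying cause is that the paper's version of the singular value decomposition theorem (that \emph{any} \(P\) diagonalizing \(AA^\dagger\) and \emph{any} \(Q\) diagonalizing \(A^\dagger A\) yield diagonal \(P^\dagger A Q^\dagger\)) is false when singular values are degenerate or zero — exactly the degeneracy worry you raised. So your instinct about where the difficulty lies is correct, but a proof that leaves that step open is not a proof; worse, under the paper's literal definitions the statement being proved is false (the \(\ket{W}\) example above), so no completion of your outline exists without first strengthening the definition of "central" to require diagonality of the matrices \(P_\ell^\dagger A^{S} Q_N^\dagger\) themselves.
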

\begin{proof}
    Suppose \(\ket{\psi}\) is Schmidt decomposable.
    \begin{align*}
        \ket{\psi} &= \sum_\ell \lambda_\ell \ket{\ell_{A_1}} \ket{\ell_{A_2}}\cdots \ket{\ell_{A_n}}\\
                   &= \sum_\ell \lambda_\ell \left ( U_{A_1}\ket{i^{A_1}_\ell}\right )\left ( U_{A_2}\ket{i^{A_2}_\ell}\right )\cdots \left ( U_{A_n}\ket{i^{A_n}_\ell}\right )\\
                   &= \sum_\ell \lambda_\ell \left ( \sum_{i_1}u^{A_1}_{i_1\ell}\ket{i^{A_1}_1}\right )\left ( \sum_{i_2}u^{A_2}_{i_2\ell}\ket{i^{A_2}_2}\right )\cdots\\& \left ( \sum_{i_n}u^{A_n}_{i_n\ell}\ket{i^{A_n}_n}\right )\\
                   &= \sum_{i_1i_2...i_n}\left (\sum_\ell \lambda_\ell u^{A_1}_{i_1\ell}u^{A_2}_{i_2\ell}\cdots u^{A_n}_{i_n\ell}\right ) \ket{i_1^{A_1}}\ket{i_2^{A_2}}\cdots\ket{i_n^{A_n}}\\
    \end{align*}
    This implies that 
    \begin{align}\label{eq:factor}
        a_{i_1i_2...i_n} = \sum_\ell \lambda_\ell u^{A_1}_{i_1\ell}u^{A_2}_{i_2\ell}\cdots u^{A_n}_{i_n\ell}
    \end{align}
    From this decompositions of \(a_{i_1i_2...i_n}\), we can define any matrix \(A^{S_{i_k i_n}}\) with elements 
    \begin{align}
        a^{S_{i_k i_n}}_{i_k i_n} = \sum_\ell u^{A_k}_{i_k\ell} d^{S_{i_k i_n}}_{\ell \ell} u^{A_n}_{\ell i_n}
    \end{align}
    This shows that we can decompose 
    \begin{align}
        A^{S_{i_k i_n}} = U^{A_k}D^{S_{i_k i_n}} U^{A_n}
    \end{align}
    Hence, the sets \(\mathcal{A}_k\) positively commute and diagonalizing pairs are of the form \((U^{A_k}, U^{A_n})\). This implies that the matrix set \(\mathcal{M}\) is central.

    For the other direction, we assume that the matrix set \(\mathcal{M}\) is central.
    \begin{align*}
        \ket{\psi} &= \sum_{i_1i_2...i_n} a_{i_1i_2...i_n}\ket{i_1^{A_1}}\ket{i_2^{A_2}}\cdots\ket{i_n^{A_n}}\\
        &= \sum_{S_{i_k, i_n}}\ket{S_{i_k, i_n}}\sum_{i_k, i_n} a_{i_1i_2...i_n}\ket{i_k}\ket{i_n}\\
        &= \sum_{S_{i_k, i_n}}\ket{S_{i_k, i_n}}\sum_{i_k, i_n} \left (\sum_\ell u^{A_k}_{i_k\ell}d^{S_{i_k, i_n}}_{\ell \ell} u^{A_n}_{\ell i_n}\right )\ket{i_k}\ket{i_n}\\
        &= \sum_{S_{i_k, i_n}}\ket{S_{i_k, i_n}}\sum_{i_k} \left (\sum_\ell u^{A_k}_{i_k\ell}d^{S_{i_k, i_n}}_{\ell \ell} \right )\ket{i_k}\left (\sum_{i_n} u^{A_n}_{\ell i_n}\ket{i^{A_n}_n}\right )\\
        &= \sum_\ell\left ( \sum_{S_{i_n}} u^{A_k}_{i_k\ell}d^{S_{i_k, i_n}}_{\ell \ell} \ket{S_{i_n}}\right ) \ket{\ell_{A_n}}\\
    \end{align*}
    The above decomposition can be obtained for any \(1\leq k\leq n-1\). This implies that
    \begin{align}
         u^{A_1}_{i_1\ell}d^{S_{i_1, i_n}}_{\ell \ell} = u^{A_2}_{i_1\ell}d^{S_{i_2, i_n}}_{\ell \ell} = \cdots = u^{A_{n-1}}_{i_1\ell}d^{S_{i_{n-1}, i_n}}_{\ell \ell} = z^{S_{i_n}}_\ell
    \end{align}
    This implies that each of the indices in \(z^{S_{i_n}}_\ell\) can be factored. This must be applicable for each \(d^{S_{i_k, i_n}}_{\ell \ell}\). Therefore, we can write
    \begin{align}
        z^{S_{i_n}}_\ell &= \lambda_\ell u^{A_1}_{i_1\ell}u^{A_2}_{i_2\ell}\cdots u^{A_{n-1}}_{i_{n-1}\ell}
    \end{align}
    The above factors imply Equation \ref{eq:factor} and hence provide the Schmidt decomposition of \(\ket{\psi}\). This concludes the proof of the theorem.
\end{proof}
\subsection{Algorithm}
Theorem \ref{thm:main} can be used to obtain Schmidt decomposition of a decomposable state efficiently. The matrix set is diagonalized by matrix pairs \((P_i, Q_n)\). The set of unitary matrices \(\{P_1, P_2, ..., P_{n-1}, Q_n\}\) obtained by considering matrices \(A^{S_{i_k, i_n}}\) form the transformation \(P_1\otimes P_2\otimes\cdots Q_n\) of \(\ket{\psi}\) that gives the Schmidt basis. The coefficients \(\lambda_\ell\) can be obtained by solving a system of \(\ell\) linear equations using Equation \ref{eq:factor}. \\
The correctness of the above algorithm follows from the correctness of Theorem \ref{thm:main}. Since the running time of each step in the algorithm is polynomial in the size of the input, the above algorithm terminates in polynomial time.
\section{Classification based on Schmidt bases}\label{sec:schmidtBases}

This section appeared in arXiv pre-print \cite{kumar2024propertiesschmidtdecomposition}. As is noted in \cite{nielsen00}, for a bipartite state, if \(\ket{\psi} = \sum_i\lambda_i\ket{i_A}\ket{i_B}\) is the Schmidt decomposition, then \(\sum_i\lambda_i(U\ket{i_A})(V\ket{i_B})\) are Schmidt decomposition of \((U\otimes V)\ket{\psi}\). This observation extends to multipartite systems as well. Moreover, if two states \(\ket{\psi}\) and \(\ket{\phi}\) have the same Schmidt coefficients, then they must be unitary transforms of each other, i.e. \(\ket{\psi} = (U\otimes V)\ket{\phi}\).
\begin{theorem}\label{thm:multiUnitary}
    Suppose \(\ket{\psi}\) and \(\ket{\phi}\) are two Schmidt decomposable multipartite states on the same Hilbert space. Then, \(\ket{\psi} = (U_1\otimes U_2\otimes \cdots U_n)\ket{\phi}\) if and only if the Schmidt decompositions of \(\ket{\psi}\) and \(\ket{\phi}\) have the same Schmidt coefficients.
\end{theorem}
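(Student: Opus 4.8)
The plan is to prove the two implications separately, reducing the multipartite question to the already-understood bipartite Schmidt theory via a single bipartition of the subsystems. Throughout I would write the Schmidt decompositions as \(\ket{\psi} = \sum_\ell \lambda_\ell \ket{\ell_{A_1}}\ket{\ell_{A_2}}\cdots\ket{\ell_{A_n}}\) and \(\ket{\phi} = \sum_\ell \mu_\ell \ket{\ell_{B_1}}\ket{\ell_{B_2}}\cdots\ket{\ell_{B_n}}\), where the indexed families are orthonormal in each factor Hilbert space.

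For the direction in which equal Schmidt coefficients imply local-unitary equivalence, I would argue constructively. Assume \(\lambda_\ell = \mu_\ell\) after a common relabeling of the index \(\ell\). For each subsystem \(j\), the orthonormal families \(\{\ket{\ell_{B_j}}\}_\ell\) and \(\{\ket{\ell_{A_j}}\}_\ell\) live in the same space \(\mathcal{H}_{A_j}\) and have the same cardinality, so I would extend each to a full orthonormal basis of \(\mathcal{H}_{A_j}\) and define the unitary \(U_j\) by \(U_j\ket{\ell_{B_j}} = \ket{\ell_{A_j}}\). Applying \(U_1\otimes\cdots\otimes U_n\) to \(\ket{\phi}\) term by term and invoking \(\mu_\ell = \lambda_\ell\) then recovers \(\ket{\psi}\) exactly.

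For the converse direction — that local-unitary equivalence forces equal coefficients — the essential point is uniqueness of the multipartite Schmidt coefficients, which I would establish by collapsing to a bipartite cut. Grouping the parties as \(A_1\) versus the composite \(A_2\cdots A_n\), I would observe that the products \(\ket{\ell_{A_2}}\cdots\ket{\ell_{A_n}}\) are orthonormal in \(\mathcal{H}_{A_2}\otimes\cdots\otimes\mathcal{H}_{A_n}\), since the inner product factorizes as \(\prod_{j\geq 2}\inpro{\ell_{A_j}}{\ell'_{A_j}} = \delta_{\ell\ell'}\). Hence the multipartite Schmidt decomposition is also a bipartite Schmidt decomposition across this cut, so the coefficients \(\lambda_\ell\) are precisely the singular values of the corresponding bipartite coefficient matrix and are therefore uniquely determined by \(\ket{\psi}\). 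Now, if \(\ket{\psi} = (U_1\otimes\cdots\otimes U_n)\ket{\phi}\), then regrouping the operator as \(U_1\otimes(U_2\otimes\cdots\otimes U_n)\) exhibits a local unitary for this bipartite cut, under which singular values are invariant; consequently the bipartite — hence multipartite — Schmidt coefficients of \(\ket{\psi}\) and \(\ket{\phi}\) coincide.

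I expect the main obstacle to be the uniqueness of the multipartite Schmidt coefficients, since there is no multipartite singular-value decomposition to quote directly; the reduction to a bipartite cut is exactly the device that sidesteps this, after which the standard invariance of singular values under local unitaries closes the argument. A secondary technical point worth handling explicitly is the extension of the (possibly non-spanning) Schmidt families to genuine orthonormal bases in the first direction, which is legitimate only because the two states share the same Schmidt rank — itself a consequence of having the same Schmidt coefficients.
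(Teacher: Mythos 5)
Your proposal is correct, and for the direction ``equal coefficients \(\Rightarrow\) local-unitary equivalence'' it is essentially the paper's argument: both you and the paper build each \(U_j\) by mapping one orthonormal Schmidt family onto the other (you are a bit more careful, noting explicitly that the families must be extended to full orthonormal bases and that equal coefficients guarantee equal cardinality so the prescription \(U_j\ket{\ell_{B_j}}=\ket{\ell_{A_j}}\) extends to a unitary; the paper simply asserts the unitary exists).

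Where you genuinely diverge is the converse direction, and your route is the stronger one. The paper pushes the Schmidt decomposition of \(\ket{\phi}\) through \(U_1\otimes\cdots\otimes U_n\), observes that the transformed states stay orthonormal, and concludes that \(\ket{\psi}\) has a Schmidt decomposition with the same coefficients. That exhibits \emph{one} decomposition of \(\ket{\psi}\) with coefficients \(\lambda_\ell\), but it silently assumes that multipartite Schmidt coefficients are unique --- without uniqueness, the conclusion ``the Schmidt decompositions of \(\ket{\psi}\) and \(\ket{\phi}\) have the same coefficients'' is not even well-posed, since \(\ket{\psi}\) might a priori admit a different decomposition with different coefficients. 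Your bipartite-cut argument supplies exactly this missing ingredient: because the product vectors \(\ket{\ell_{A_2}}\cdots\ket{\ell_{A_n}}\) are orthonormal in \(\mathcal{H}_{A_2}\otimes\cdots\otimes\mathcal{H}_{A_n}\), any multipartite Schmidt decomposition is simultaneously a bipartite one across the cut \(A_1\) versus \(A_2\cdots A_n\), so the coefficients are singular values of the bipartite coefficient matrix and are uniquely determined by the state; invariance of singular values under the regrouped local unitary \(U_1\otimes(U_2\otimes\cdots\otimes U_n)\) then finishes the proof. In short, your approach costs an appeal to bipartite SVD uniqueness, but in exchange it closes a real gap that the paper's term-by-term transport leaves open.
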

\begin{proof}
    If \(\ket{\psi} = (U_1\otimes U_2\otimes \cdots U_n)\ket{\phi}\), we can use the Schmidt decomposition of \(\ket{\phi}\) as 
    \begin{align*}
        \ket{\phi} &= \sum_\ell \lambda_\ell\ket{\ell_{A_1}}\ket{\ell_{A_2}}\cdots\ket{\ell_{A_n}}\\
        \ket{\psi} &= \sum_\ell \lambda_\ell(U_{A_1}\ket{\ell_{A_1}})(U_{A_2}\ket{\ell_{A_2}})\cdots(U_{A_n}\ket{\ell_{A_n}})\\
         &= \sum_\ell \lambda_\ell\ket{\ell^\psi_{A_1}}\ket{\ell^\psi_{A_2}}\cdots\ket{\ell^\psi_{A_n}}
    \end{align*}
    As states \(\ket{\ell^\psi_{A_k}}\) are obtained via unitary transform of orthonormal states \(\ket{\ell_{A_k}}\), they remain orthonormal and hence provide a valid Schmidt decomposition of \(\ket{\psi}\). Therefore, both \(\psi\) and \(\phi\) have the same Schmidt coefficients.

    For converse, assume \(\psi\) and \(\phi\) have the same Schmidt coefficients, i.e.
    \begin{align*}
        \ket{\phi} &= \sum_\ell \lambda_\ell\ket{\ell_{A_1}}\ket{\ell_{A_2}}\cdots\ket{\ell_{A_n}}\\
        \ket{\psi} &= \sum_\ell \lambda_\ell\ket{\ell^\psi_{A_1}}\ket{\ell^\psi_{A_2}}\cdots\ket{\ell^\psi_{A_n}}
    \end{align*}
    Since states \(\{\ket{\ell_{A_k}}\}\) and \(\{\ket{\ell^\psi_{A_k}}\}\) are orthonormal vectors, there exists unitary matrix \(U_{A_k}\) such that \(\ket{\ell^\psi_{A_k}} = U_{A_k}\ket{\ell_{A_k}}\). This implies that 
    \begin{align*}
        \ket{\psi} &= \sum_\ell \lambda_\ell\ket{\ell^\psi_{A_1}}\ket{\ell^\psi_{A_2}}\cdots\ket{\ell^\psi_{A_n}}\\
        &=\sum_\ell \lambda_\ell(U_{A_1}\ket{\ell_{A_1}})(U_{A_2}\ket{\ell_{A_2}})\cdots(U_{A_n}\ket{\ell_{A_n}})\\
        &= (U_1\otimes U_2\otimes \cdots U_n) \sum_\ell \lambda_\ell\ket{\ell_{A_1}}\ket{\ell_{A_2}}\cdots\ket{\ell_{A_n}}\\
        &= (U_1\otimes U_2\otimes \cdots U_n)\ket{\phi}
    \end{align*}
\end{proof}
Theorem \ref{thm:multiUnitary} implies that if we define Schmidt decomposable states \(\ket{\psi}\) and \(\ket{\phi}\) \emph{equivalent} if \(\ket{\psi} = (U_1\otimes U_2\otimes \cdots U_n)\ket{\phi}\), then the Schmidt number is decided by the number of separable orthonormal basis states. We can consider Schmidt decomposition as establishing a bijection between orthonormal set of vectors of each component space. 
\begin{figure}
    \centering
    \includegraphics[width=0.3\linewidth]{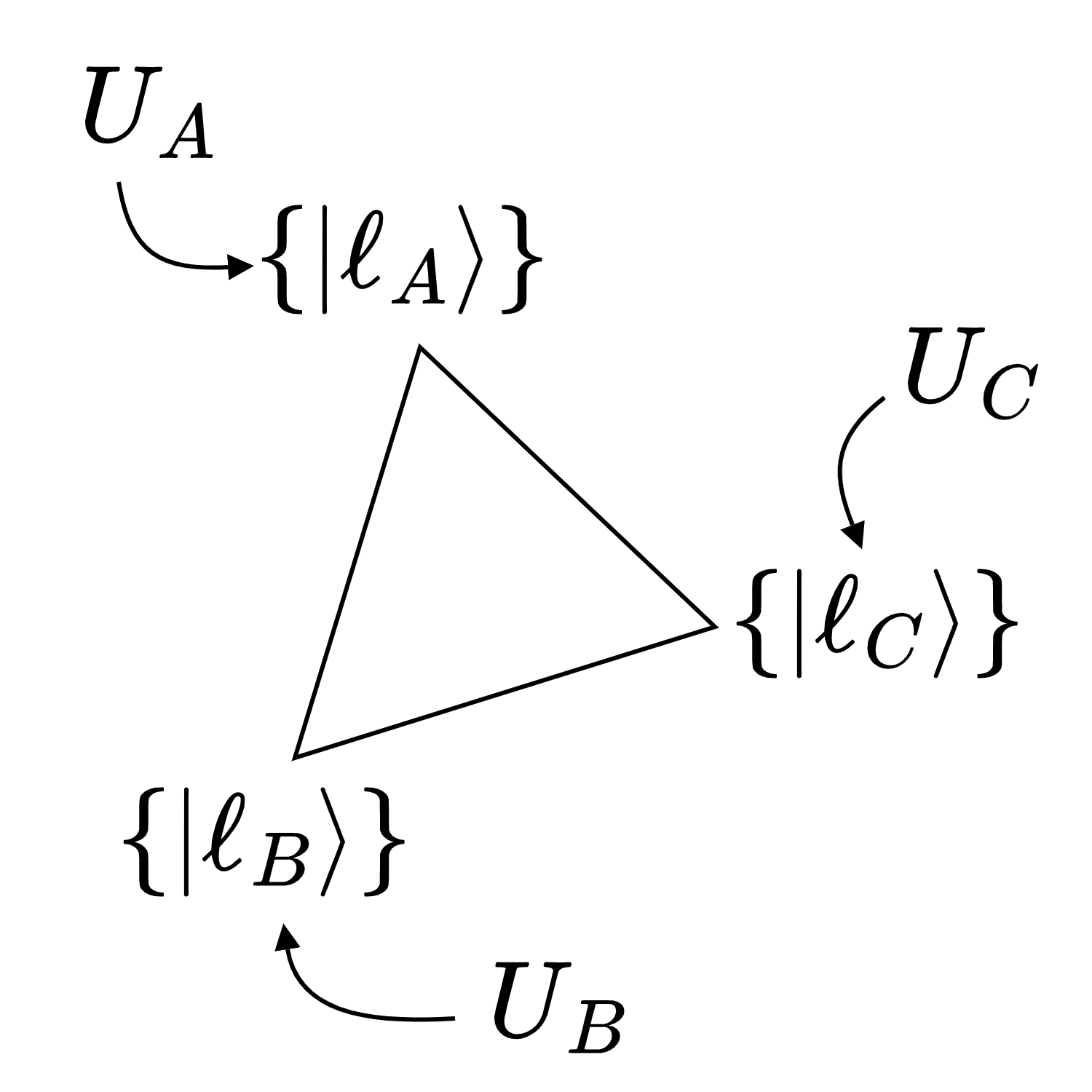}
    \caption{Schmidt decomposition establishes a bijection between orthonormal set of vectors of each subsystem. Equivalent states having the same Schmidt coefficients are generated using unitary operations on individual subsystems.}
    \label{fig:enter-label}
\end{figure}

\begin{observation}
    All two qubit Schmidt bases with Schmidt number \(1\) are equivalent to \(\ket{00}\) and Schmidt number \(2\) are equivalent to \(\{\ket{00},\ket{11}\}\).
\end{observation}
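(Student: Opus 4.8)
The plan is to argue directly from the definition of equivalence by local (tensor-factor) unitaries, exploiting that each qubit subsystem is only two dimensional. First I would record the two facts that drive everything: the Schmidt number counts the nonzero coefficients $\lambda_\ell$, the associated Schmidt vectors $\{\ket{\ell_{A_1}}\}$ and $\{\ket{\ell_{A_2}}\}$ are by the definition of a Schmidt decomposition orthonormal within each subsystem, and for two qubits this count can only be $1$ or $2$ since $\dim{\mathbb{C}^2}=2$. I would then treat the two cases separately.

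For Schmidt number $1$, the basis is a single product vector $\ket{a}\ket{b}$ with $\ket{a},\ket{b}$ unit vectors in $\mathbb{C}^2$. Since any unit vector in a qubit may be taken as the first column of a unitary matrix, I would choose unitaries $U_1,U_2$ with $U_1\ket{a}=\ket{0}$ and $U_2\ket{b}=\ket{0}$, whence $(U_1\otimes U_2)\ket{a}\ket{b}=\ket{00}$. This exhibits the required local-unitary equivalence to $\ket{00}$.

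For Schmidt number $2$, the basis consists of two product vectors $\ket{a_1}\ket{b_1}$ and $\ket{a_2}\ket{b_2}$. The defining orthonormality of the Schmidt decomposition forces $\{\ket{a_1},\ket{a_2}\}$ and $\{\ket{b_1},\ket{b_2}\}$ each to be orthonormal, and since each sits in a two-dimensional space, each is in fact a full orthonormal basis of its qubit. I would then define $U_1$ by $U_1\ket{a_1}=\ket{0}$, $U_1\ket{a_2}=\ket{1}$ and $U_2$ by $U_2\ket{b_1}=\ket{0}$, $U_2\ket{b_2}=\ket{1}$, both of which are unitary because they send one orthonormal basis to another, so that $(U_1\otimes U_2)$ carries $\{\ket{a_1}\ket{b_1},\ket{a_2}\ket{b_2}\}$ to $\{\ket{00},\ket{11}\}$.

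The computation here is entirely elementary, so the only point requiring care is conceptual rather than technical: I must be explicit that equivalence is witnessed by unitaries acting on the individual tensor factors, and that the orthonormality of the two single-qubit Schmidt vector families — which is part of the definition of a Schmidt decomposition and, as observed just before Theorem \ref{thm:multiUnitary}, is preserved under such local unitaries — is exactly what promotes an orthonormal pair in $\mathbb{C}^2$ to a complete basis. Beyond invoking these definitional facts correctly, I do not expect any genuine obstacle.
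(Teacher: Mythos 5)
Your proposal is correct. The paper states this observation without an explicit proof, and your argument --- promoting the orthonormal Schmidt vectors of each qubit to a full basis of \(\mathbb{C}^2\) and sending them to computational basis states by local unitaries --- is exactly the reasoning implicit in the paper's notion of equivalence and in the converse direction of the proof of Theorem \ref{thm:multiUnitary} (orthonormal sets are connected by unitaries), so it supplies the intended justification rather than a genuinely different route.
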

In case of three qubit system, we can partition the system in two ways, (a) one subsystem with \(1\) qubit and another with \(2\) qubits, and (b) three subsystems with \(1\) qubit each. In both the cases, Schmidt rank \(1\) bases are \(\ket{000}\). Schmidt rank \(2\) bases are \(\{\ket{000},\ket{111}\}\). 
\begin{observation}
    There are no Schmidt rank \(3\) states on a three qubit system (irrespective of partition).
\end{observation}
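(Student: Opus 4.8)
The plan is to reduce the statement to a pure dimension count driven by the orthonormality requirement in the smallest subsystem. The paper's definition of a Schmidt decomposition, $\ket{\psi} = \sum_\ell \lambda_\ell \ket{\ell_{A_1}}\ket{\ell_{A_2}}\cdots\ket{\ell_{A_n}}$, demands that for each fixed component $k$ the vectors $\{\ket{\ell_{A_k}}\}_\ell$ be orthonormal in $\mathcal{H}_{A_k}$. Since the summation index $\ell$ is \emph{shared} across all components, the Schmidt rank (the number of nonzero $\lambda_\ell$) is exactly the common length of these orthonormal families, and I would argue that this forces the rank to be small whenever any single subsystem has small dimension.

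The first and central step is to establish the general bound: an orthonormal set in a Hilbert space of dimension $d$ contains at most $d$ vectors, so the number of distinct labels $\ell$ cannot exceed $\dim{\mathcal{H}_{A_k}}$ for any $k$. Taking the minimum over $k$, the Schmidt rank of any Schmidt-decomposable multipartite state is at most $\min_k \dim{\mathcal{H}_{A_k}}$, the dimension of the smallest subsystem in the chosen partition.

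Next I would apply this bound to each of the two partition types of a three-qubit system. For partition (b), the tripartition into three single-qubit subsystems, every $\mathcal{H}_{A_k}$ has dimension $2$, so the bound yields Schmidt rank $\leq 2$ at once. For partition (a), the bipartition of one qubit against the remaining two, the smaller subsystem has dimension $2$ and the larger has dimension $4$; here the bipartite Schmidt decomposition is just the SVD of the $2\times 4$ coefficient matrix, whose number of nonzero singular values cannot exceed its smaller dimension $2$, so again the rank is at most $2$. Since every nontrivial partition of three qubits necessarily isolates at least one single qubit, the smallest subsystem always has dimension $2$, which covers the phrase ``irrespective of partition,'' and I would conclude that a Schmidt rank of $3$ is unattainable.

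I expect no genuine obstacle, as the argument is essentially a counting bound; the only care required is in the bookkeeping. Specifically, I would make explicit that the single index $\ell$ is truly common to all tensor factors (so that one small subsystem suffices to cap the rank), and that the enumeration of partition types is exhaustive, so that the bound $\min_k \dim{\mathcal{H}_{A_k}} = 2$ holds in every case. With those two points pinned down, combining the two partition cases completes the proof that no three-qubit state admits a three-term Schmidt decomposition.
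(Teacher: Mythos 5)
Your proof is correct and matches the reasoning the paper itself relies on: the paper states this observation without an explicit proof, but its underlying principle --- that the Schmidt rank cannot exceed the dimension of the smallest subsystem, used explicitly in the proof of Theorem \ref{thm:Spart} (``the dimension of the smallest subsystem must be at least \(K\)'') --- is exactly your central bound, applied to the two possible partition types of three qubits. Your write-up simply makes explicit the dimension count the paper leaves implicit, so there is nothing to add.
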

\begin{observation}
    The Schmidt number of an \(n\) qubit system is bounded by \(2^{\lfloor \frac{n}{2}\rfloor}\) irrespective of partition. In fact, the highest Schmidt number is achieved for equal bipartition of the system.
\end{observation}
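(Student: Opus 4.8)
The plan is to reduce the entire statement to a counting argument on the dimensions of the local Hilbert spaces. The structural fact I would lean on is that in any Schmidt form $\ket{\psi} = \sum_\ell \lambda_\ell \ket{\ell_{A_1}}\cdots\ket{\ell_{A_m}}$, the vectors $\{\ket{\ell_{A_j}}\}_\ell$ are by definition orthonormal inside each subsystem $A_j$. An orthonormal set in $\mathcal{H}_{A_j}$ has at most $\dim{\mathcal{H}_{A_j}}$ members, so the number of distinct terms indexed by $\ell$, i.e.\ the Schmidt number, is bounded by $\dim{\mathcal{H}_{A_j}}$ for every $j$. Taking the tightest of these, the Schmidt number of any partition is at most $\min_j \dim{\mathcal{H}_{A_j}}$, the smallest local dimension.

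Next I would specialize to $n$ qubits. Fix a partition into $m\ge 2$ blocks holding $n_1,\dots,n_m$ qubits with $\sum_j n_j = n$, so $\dim{\mathcal{H}_{A_j}} = 2^{n_j}$. Writing $n_{\min} = \min_j n_j$ for the size of the smallest block, the inequality $m\,n_{\min} \le \sum_j n_j = n$ together with $m\ge 2$ forces $n_{\min} \le n/2$, and integrality of $n_{\min}$ upgrades this to $n_{\min}\le \lfloor n/2\rfloor$. Combining with the first paragraph, the Schmidt number is at most $2^{n_{\min}} \le 2^{\lfloor n/2\rfloor}$, which is exactly the claimed partition-independent bound.

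For the final clause I would exhibit a state saturating the bound under the balanced bipartition. Split the $n$ qubits into $A$ with $\lfloor n/2\rfloor$ qubits and $B$ with the remaining $\lceil n/2\rceil$ qubits, and take the maximally entangled state
\begin{align*}
\ket{\psi} = \frac{1}{\sqrt{2^{\lfloor n/2\rfloor}}}\sum_{x=0}^{2^{\lfloor n/2\rfloor}-1}\ket{x_A}\ket{x_B},
\end{align*}
where the $2^{\lfloor n/2\rfloor}$ strings $x$ are embedded as orthonormal states on the (at least as large) subsystem $B$. Since every bipartite state is Schmidt decomposable, this already \emph{is} a Schmidt decomposition, and it carries exactly $2^{\lfloor n/2\rfloor}$ nonzero terms, so the upper bound is attained.

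The argument is essentially elementary, so I do not anticipate a serious obstacle; the calculations are all routine. The one point that needs care is the integrality step $n_{\min}\le n/2 \Rightarrow n_{\min}\le\lfloor n/2\rfloor$, together with the reading of the second sentence: \emph{highest Schmidt number} should be understood as the maximum achievable Schmidt number over all genuine partitions, and one verifies that splitting into three or more blocks, or using an unbalanced bipartition, only decreases the smallest block size $n_{\min}$ and hence the bound $2^{n_{\min}}$, so the balanced bipartition is precisely where the maximum is realized.
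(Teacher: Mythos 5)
Your proof is correct. Note that the paper states this result only as an observation, with no accompanying proof, so there is nothing to compare against; your dimension-counting argument --- the Schmidt number is bounded by the smallest local dimension because the states $\ket{\ell_{A_j}}$ must be orthonormal within each block, the pigeonhole bound $n_{\min}\le\lfloor n/2\rfloor$ for any partition into $m\ge 2$ blocks, and saturation by the maximally entangled state across the balanced bipartition --- supplies exactly the justification the paper leaves implicit. Indeed, the same logic underlies the paper's neighboring observation that no three-qubit state has Schmidt rank $3$, so your route is the intended one rather than a departure from it. One quibble: your closing remark that the balanced bipartition is \emph{precisely} where the maximum is realized overstates uniqueness --- for $n=3$ the tripartition $(1,1,1)$ also attains Schmidt number $2$ (e.g.\ the GHZ state), since its smallest block is no smaller than that of the bipartition $(1,2)$ --- but the observation only asserts that the equal bipartition achieves the maximum, which your argument does establish.
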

Let us define the problem SCHMIDT-PARTITION  as follows:\\
\textbf{Input:} \(n\) systems of with \(d_1, d_2,...,d_n\) qubits\\
\textbf{Task:} partition the system into two parts such that there exists a state of Schmidt number \(K\)

\begin{theorem}\label{thm:Spart}
    SCHMIDT-PARTITION is \texttt{NP}-complete.
\end{theorem}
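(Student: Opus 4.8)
The plan is to strip the quantum language off the problem and reduce \texttt{SCHMIDT-PARTITION} to a purely arithmetic statement about products of the local dimensions, and then treat membership and hardness separately.

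First I would reformulate the task. A bipartition of the systems into a set $S$ and its complement $\bar S$ creates a bipartite cut whose local dimensions are $D_S = \prod_{i\in S} d_i$ and $D_{\bar S} = \prod_{i\notin S} d_i = N/D_S$, where $N = \prod_{i=1}^n d_i$. By the bipartite Schmidt decomposition, any state across this cut has Schmidt number at most $\min(D_S, D_{\bar S})$, and every value $1 \le r \le \min(D_S, D_{\bar S})$ is attained, for instance by $\frac{1}{\sqrt r}\sum_{\ell=1}^{r}\ket{\ell}\ket{\ell}$. Hence a state of Schmidt number $K$ exists for the cut $(S,\bar S)$ if and only if $K \le \min(D_S, D_{\bar S})$, equivalently $K \le D_S \le N/K$. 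So the instance is a yes-instance precisely when some subset $S$ satisfies $\prod_{i\in S} d_i \in [K, N/K]$.

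Membership in \texttt{NP} is then immediate: the certificate is the subset $S$ itself. Since each $D_S$ is a product of at most $n$ factors, its binary length is at most $\sum_i \lceil \log_2 d_i\rceil$, which is polynomial in the input, so I can compute $D_S$ and $D_{\bar S}$ and compare them against $K$ in polynomial time. For hardness I would reduce from \textsc{Product Partition} (deciding whether positive integers $a_1,\dots,a_n$ split into two groups of equal product), which is known to be \texttt{NP}-complete. Map each $a_i$ to a system of dimension $d_i = a_i$ and set $K = \lceil \sqrt N \rceil$. If $N$ is a perfect square then $K = \sqrt N$, and since $D_S D_{\bar S} = N = K^2$, requiring both $D_S \ge K$ and $D_{\bar S}\ge K$ forces $D_S = D_{\bar S} = \sqrt N$, i.e. an equal-product partition; the converse is clear. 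If $N$ is not a perfect square then $K^2 > N$, so no cut can have both products $\ge K$, matching the (necessarily negative) \textsc{Product Partition} instance. The map is computable in polynomial time, so hardness follows and the theorem is proved.

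The main obstacle is keeping the reduction polynomial. The naive route of encoding an additive problem such as \textsc{Partition} through $d_i = 2^{a_i}$ is fatal, since the dimensions then have exponentially many bits; one must start from a genuinely multiplicative \texttt{NP}-complete problem, and \textsc{Product Partition} supplies exactly this. If a self-contained argument is preferred, I would instead assign each element of an \textsc{Exact-Cover-by-3-Sets} instance a distinct small prime and each triple the product of its three primes, so that the numbers stay polynomially bounded while the total product has only polynomial bit length; the delicate point there is padding with extra prime factors so that the total product becomes a perfect square and a balanced split is forced to respect the covering multiplicities, which is precisely the bookkeeping that reducing from \textsc{Product Partition} lets us sidestep.
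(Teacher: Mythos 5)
Your proof is correct, but it takes a genuinely different route from the paper's. The paper's proof takes logarithms of the dimensions, turns the multiplicative condition into an additive one, and then asserts that the resulting question is ``equivalent'' to the \texttt{NP}-complete SUBSET SUM problem. You instead reduce from PRODUCT PARTITION: map $a_i \mapsto d_i = a_i$, set $K=\lceil\sqrt N\rceil$, and observe that $D_S \ge K$ and $N/D_S \ge K$ can hold simultaneously only when $D_S = D_{\bar S} = \sqrt N$, which is exactly an equal-product split. Your route buys rigor that the paper's sketch lacks: for hardness one must map an arbitrary SUBSET SUM instance $(a_1,\dots,a_n;T)$ \emph{to} a SCHMIDT-PARTITION instance, and under the log correspondence this forces dimensions $d_i=2^{a_i}$ whose binary encodings are exponentially longer than the encodings of the $a_i$ (moreover, logarithms of integers are not integers, so the claimed equivalence with integer SUBSET SUM is not even well-typed); the pitfall you explicitly flag in your last paragraph is precisely the gap in the paper's own argument. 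By starting from a genuinely multiplicative \texttt{NP}-complete problem you sidestep this blow-up entirely, at the modest cost of invoking the correct but less classical \texttt{NP}-completeness of PRODUCT PARTITION. You also spell out the \texttt{NP}-membership verification (certificate $S$, with $D_S$ of polynomial bit length), which the paper leaves implicit.
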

\begin{proof}
    To have a state of the Schmidt number \(K\), the dimension of the smallest subsystem must be at least \(K\). The dimension of combined system of two subsystems of dimensions \(2^{d_1}\) and \(2^{d_2}\) is \(2^{d_1+d_2}\). Since all dimensions are positive numbers, we can take \(\log\) of the dimensions and ask whether there exists a bipartition of the system such that the sum of \(d_i\) for both parts is at least \(\log(K)\) each. This is equivalent to the \texttt{NP}-complete PARTITION problem in which the input is a set of positive numbers and the task is to partition the set into two subsets whose sum is at least \(K\). It is important to note that not all instances of PARTITION are \texttt{NP}-complete. As the dimensions $2^{d_i}$ can be arbitrarily large, even $d_i$ can be large making the corresponding instance hard.
\end{proof}
The Theorem \ref{thm:Spart} immediately implies the following corollary.
\begin{corollary}
    Given \(n\) systems of \(d_1, d_2,...,d_n\) qubits, it is \texttt{NP}-complete to find a Schmidt decomposable state of highest Schmidt number.
\end{corollary}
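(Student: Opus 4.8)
The plan is to derive this from the \texttt{NP}-completeness of SCHMIDT-PARTITION (Theorem \ref{thm:Spart}). First I would pin down what ``highest Schmidt number'' means as a function of the bipartition. For a fixed bipartition of the $n$ systems into a part $S$ and its complement, with total dimensions $D_S = \prod_{i \in S} d_i$ and $D_{\bar S} = \prod_{i \notin S} d_i$, the bipartite Schmidt bound says that any state across that cut has Schmidt number at most $\min(D_S, D_{\bar S})$, and this value is attained by a maximally entangled state. Hence the highest Schmidt number over all choices is exactly $K^\star = \max_S \min(D_S, D_{\bar S})$, and an optimal Schmidt decomposable state is obtained by taking a maximally entangled state across an optimal cut.

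Next I would set up the polynomial-time equivalence with the decision problem. Membership in \texttt{NP} is immediate: the decision formulation ``is the highest achievable Schmidt number at least $K$?'' is precisely SCHMIDT-PARTITION, and a yes-instance is certified by exhibiting a bipartition $S$ together with the verification that $\min(D_S, D_{\bar S}) \ge K$, which takes polynomially many bit operations in the input size. For hardness, suppose one could compute $K^\star$ (or produce an optimal state, from which $K^\star$ is read off) in polynomial time. Then SCHMIDT-PARTITION on the same instance with threshold $K$ would be decided simply by testing $K^\star \ge K$. Since SCHMIDT-PARTITION is \texttt{NP}-complete, computing the highest Schmidt number is \texttt{NP}-hard; combined with \texttt{NP} membership, the problem is \texttt{NP}-complete.

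The step requiring the most care is the passage between the threshold condition used in Theorem \ref{thm:Spart} and the maximization here. Maximizing $\min(D_S, D_{\bar S})$ is the optimization analogue of a balanced PARTITION rather than of SUBSET SUM directly, so I would verify that the logarithmic encoding invoked in the proof of Theorem \ref{thm:Spart} --- which replaces each product $\prod_i d_i$ by a sum $\sum_i \log d_i$ --- preserves the computational equivalence when we return to the products themselves. Because the optimum $K^\star$ is determined by exactly the same collection of products that SCHMIDT-PARTITION inspects, and that decision problem is already \texttt{NP}-complete, this equivalence follows once the bipartite Schmidt bound is in hand: no genuinely new reduction is needed beyond observing that an optimization oracle answers the threshold query. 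I expect the only real subtlety to be arguing that an \emph{optimal state}, and not merely the optimal value, can be certified and extracted in polynomial time, which the maximally entangled construction across the optimal cut resolves.
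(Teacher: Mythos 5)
Your proposal is correct and follows the same route as the paper: the paper derives the corollary directly from Theorem \ref{thm:Spart} (indeed, it offers no further argument, stating the implication is immediate). Your write-up simply fills in the details the paper leaves implicit --- the max--min characterization \(K^\star = \max_S \min(D_S, D_{\bar S})\), NP membership via the bipartition certificate, and hardness via the standard ``an optimization oracle answers the threshold query'' argument --- all of which are sound.
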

\begin{theorem}\label{thm:reduced}
    If \(\ket{\psi} = \sum_{ij}a_{ij}\ket{i_A}\ket{j_B}\) is a bipartite state where \(\ket{i_A}\) and \(\ket{j_B}\) for orthonormal bases of their respective spaces, then the reduced density matrix of any subsystem is given by 
    \begin{align}
        \rho^A &= AA^\dagger\\
        \rho^B &= BB^\dagger = (A^\dagger A)^T
    \end{align}
    where \(A\) is the matrix of coefficients \(a_{ij}\) and \(B = A^T\).
\end{theorem}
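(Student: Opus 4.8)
The plan is to compute each reduced density matrix directly from the definition, starting from the full pure-state density matrix and tracing out one subsystem at a time, then reading off the answer as a matrix product by comparing matrix elements. First I would form $\rho = \outpro{\psi}{\psi}$ and expand it using the given coefficients, so that $\rho = \sum_{ij}\sum_{kl} a_{ij} a_{kl}^{*}\, \ket{i_A}\ket{j_B}\bra{k_A}\bra{l_B}$.

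To obtain $\rho^A$, I would take the partial trace over $B$, writing $\rho^A = \ptr{B}{\rho} = \sum_m \bra{m_B}\rho\ket{m_B}$. The orthonormality of $\{\ket{j_B}\}$ collapses the $B$-inner products to Kronecker deltas $\inpro{m_B}{j_B} = \delta_{mj}$ and $\inpro{l_B}{m_B} = \delta_{lm}$, leaving $\rho^A = \sum_{ik}\left(\sum_m a_{im} a_{km}^{*}\right)\outpro{i_A}{k_A}$. The key step is then to recognize the bracketed sum as a matrix element: since $(A^\dagger)_{mk} = a_{km}^{*}$, we have $\sum_m a_{im} a_{km}^{*} = (AA^\dagger)_{ik}$, which identifies $\rho^A = AA^\dagger$.

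For $\rho^B$ the computation is symmetric: tracing over $A$ and using orthonormality of $\{\ket{i_A}\}$ gives $\rho^B = \sum_{jl}\left(\sum_m a_{mj} a_{ml}^{*}\right)\outpro{j_B}{l_B}$. Here I would note that $\sum_m a_{mj} a_{ml}^{*} = \overline{(A^\dagger A)_{jl}} = (A^\dagger A)_{lj}$, the last equality because $A^\dagger A$ is Hermitian, so the matrix of $\rho^B$ is exactly $(A^\dagger A)^T$. The alternative form $BB^\dagger$ follows by substituting $B = A^T$: since $B_{jm} = a_{mj}$, one gets $(BB^\dagger)_{jl} = \sum_m B_{jm} B_{lm}^{*} = \sum_m a_{mj} a_{ml}^{*}$, reproducing the same entries.

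I do not expect any genuine obstacle here; the result is a standard consequence of the definition of partial trace. The only point demanding care is index bookkeeping: one must track the complex conjugation and the order of indices accurately to distinguish $A^\dagger A$ from $(A^\dagger A)^T$, and thereby land on the transposed form claimed for $\rho^B$ rather than on $A^\dagger A$ itself.
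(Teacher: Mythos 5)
Your proposal is correct and follows essentially the same route as the paper's own proof: expand \(\rho = \outpro{\psi}{\psi}\), take the partial trace using orthonormality to collapse one index, and recognize the remaining sums as the matrix elements of \(AA^\dagger\) and \(BB^\dagger\). If anything, you are slightly more thorough, since you explicitly verify the identity \(\rho^B = (A^\dagger A)^T\) via Hermiticity of \(A^\dagger A\), a step the paper asserts in the statement but does not spell out in its proof.
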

\begin{proof}
The density matrix of the system is 
    \begin{align*}
        \rho &= \outpro{\psi}{\psi}\\
        &= \sum_{ijkl}a_{ij}a^*_{kl}\outpro{i_Aj_B}{k_Al_B}
    \end{align*}
Tracing out B gives
\begin{align*}
    \rho_A &= \sum_{ijk}a_{ij}a^*_{kj}\outpro{i_A}{k_A}\\
    &= \sum_{ik}\left(\sum_j a_{ij}a^*_{kj}\right )\outpro{i_A}{k_A}\\
    &= AA^\dagger
\end{align*}
Tracing out A gives
\begin{align*}
    \rho_B &= \sum_{ijl}a_{ij}a^*_{il}\outpro{j_B}{l_B}\\
    &= \sum_{jl}\left(\sum_i a^*_{il}a_{ij}\right )\outpro{j_B}{l_B}\\
    &= A^T (A^T)^\dagger = BB^\dagger
\end{align*}
\end{proof}
Theorem \ref{thm:reduced} can be extended to multipartite states as well. We show this in case of tripartite states. Similar pattern can be used for higher partite states.
\begin{theorem}\label{thm:reducedTripartite}
    If \(\ket{\psi} = \sum_{ijk}a_{ijk}\ket{i_A}\ket{j_B}\ket{k_C}\) is a tripartite state where \(\ket{i_A}, \ket{j_B}\) and \(\ket{k_C}\) for orthonormal bases of their respective spaces, then the reduced density matrices of one-level and two-level subsystems are given by 
    \begin{align}
        \rho^A &= AA^\dagger\\
        \rho^{AB} &= BB^\dagger
    \end{align}
    where \(B\) is the matrix of coefficients \(a_{ij,k}\) of dimension \(n_An_B\times n_C\) and \(A\) is the matrix of coefficients \(a_{i, jk}\) of dimension \(n_A\times n_Bn_C\).
\end{theorem}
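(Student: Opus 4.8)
The plan is to reduce the tripartite claim to the bipartite result already proved in Theorem \ref{thm:reduced} by grouping subsystems. A tripartite state on $\mathcal{H}_A\otimes\mathcal{H}_B\otimes\mathcal{H}_C$ can be viewed as a bipartite state across two different cuts, and each cut delivers exactly one of the two asserted identities; no new computation beyond Theorem \ref{thm:reduced} should be required.

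First, for $\rho^A$ I would place the cut at $A\mid BC$. The product states $\ket{j_B}\ket{k_C}$ form an orthonormal basis of $\mathcal{H}_B\otimes\mathcal{H}_C$ indexed by the composite label $(jk)$, so
\begin{align*}
    \ket{\psi} = \sum_{i,(jk)} a_{i,jk}\,\ket{i_A}\ket{(jk)_{BC}}
\end{align*}
is a bipartite state whose coefficient matrix is precisely the $n_A\times n_Bn_C$ matrix $A$. Applying the first identity of Theorem \ref{thm:reduced} to this bipartition (tracing out the composite $BC$) yields $\rho^A = AA^\dagger$ directly.

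Second, for $\rho^{AB}$ I would place the cut at $AB\mid C$. The product states $\ket{i_A}\ket{j_B}$ form an orthonormal basis of $\mathcal{H}_A\otimes\mathcal{H}_B$ indexed by $(ij)$, so
\begin{align*}
    \ket{\psi} = \sum_{(ij),k} a_{ij,k}\,\ket{(ij)_{AB}}\ket{k_C}
\end{align*}
is a bipartite state whose coefficient matrix is the $n_An_B\times n_C$ matrix $B$. Treating $AB$ as the ``first'' subsystem and applying the same identity of Theorem \ref{thm:reduced} (now tracing out $C$) gives $\rho^{AB} = BB^\dagger$.

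The step I expect to be the main obstacle is the bookkeeping that makes the reduction rigorous: one must check that the grouped product states are genuinely orthonormal and span the composite factor, and that the relabelings $(j,k)\mapsto(jk)$ and $(i,j)\mapsto(ij)$ are consistent bijections so that $A$ and $B$ are well-defined matrices of the stated dimensions. Once this is settled, Theorem \ref{thm:reduced} applies verbatim. If one prefers to avoid the grouping, the identities follow just as quickly by repeating the explicit partial-trace computation of that theorem: writing $\rho=\outpro{\psi}{\psi}$ and using $\inpro{j_B}{j'_B}\inpro{k_C}{k'_C}=\delta_{jj'}\delta_{kk'}$ to contract the traced indices produces $\rho^A=AA^\dagger$, while contracting only the $C$ indices produces $\rho^{AB}=BB^\dagger$. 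The same grouping strategy extends to higher-partite states by bundling all the remaining subsystems into one composite factor.
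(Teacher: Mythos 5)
Your proposal is correct, and it is organized differently from the paper's own proof. The paper proves the tripartite statement by redoing the explicit partial-trace computation from scratch: it writes \(\outpro{\psi}{\psi} = \sum_{ijk,lmn} a_{ijk}a^*_{lmn}\outpro{i_Aj_Bk_C}{l_Am_Bn_C}\), contracts the traced indices, and only at the end regroups \((j,k)\) or \((i,j)\) into composite row/column labels to recognize the products \(AA^\dagger\) and \(BB^\dagger\) --- in other words, the paper's proof is exactly the ``fallback'' you sketch in your closing remarks, not your primary route. Your primary route instead groups the subsystems first (\(A\mid BC\) for the first identity, \(AB\mid C\) for the second) and then invokes Theorem \ref{thm:reduced} as a black box on each bipartition. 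Both are valid; the trade-off is that your reduction is more modular and scales immediately to \(n\)-partite states (bundle any subset of parties into one factor and cite the bipartite result once), which the paper only gestures at with ``similar pattern can be used for higher partite states,'' whereas the paper's direct computation is self-contained and makes explicit precisely the index bookkeeping --- that the relabeling \((j,k)\mapsto(jk)\) is a bijection onto an orthonormal product basis --- which you correctly identify as the only nontrivial step your reduction must verify. One small caution: when you apply Theorem \ref{thm:reduced} to the cut \(AB\mid C\), you are using its first identity with the composite system \(AB\) playing the role of the first factor, not its second identity (which involves the transpose \(B = A^T\)); you say this correctly, but it is worth stating explicitly since the paper's Theorem \ref{thm:reduced} uses the letter \(B\) for a different matrix than the tripartite statement does.
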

\begin{proof}
    Let us calculate \(\rho_A\) first by tracing over \(B\) and \(C\) in the matrix
    \begin{align}\label{eq:tridensity}
        \outpro{\psi}{\psi} = \sum_{ijk, lmn} a_{ijk}a^*_{lmn}\outpro{i_Aj_Bk_C}{l_Am_Bn_C}
    \end{align}
    Tracing out subsystems \(B\) and \(C\) gives
    \begin{align*}
        \rho_A &= \sum_{ijkl}a_{ijk}a^*{ljk}\outpro{i_A}{l_A}\\
               &= \sum_{il}\left(\sum_{jk}a_{ijk}a^*_{ljk}\right )\outpro{i_A}{l_A}\\
               &= \sum_{il}\left(\sum_{jk}a_{i,jk}a^\dagger_{jk,l}\right )\outpro{i_A}{l_A}
    \end{align*}
    In the inner sum, we treat the indices \(jk\) as one giving us a matrix \(A\) of dimension \(n_A\times n_Bn_C\). Therefore,
    \begin{align*}
        \rho_A = AA^\dagger
    \end{align*}
    Next, tracing out \(C\) only gives
    \begin{align*}
        \rho_{AB} &= \sum_{ijk,lm}a_{ijk}a^*_{lmk}\outpro{i_Aj_B}{l_Am_B}\\
                  &= \sum_{ij,lm}\left (\sum_k a_{ij,k}a^*_{lm,k}\right )\outpro{i_Aj_B}{l_Am_B}
    \end{align*}
    In the inner sum, we define matrix \(B\) of elements \(a_{ij, k}\) of dimension \(n_An_B\times n_C\). Therefore,
    \begin{align*}
        \rho_{AB} = BB^\dagger
    \end{align*}
    Reduced density matrices of other subsystems can be obtained by relabeling the indices appropriately.
\end{proof}
It is easy to see that the Schmidt number of a bipartite state is equal to the rank of the reduced density matrix of one of the subsystems by using \(rank(A) = rank(AA^\dagger) = rank(A^\dagger A) = rank((A^\dagger A)^T)\). This result holds for multipartite states as well. 
\begin{theorem}
    If \(\ket{\psi}\) is a Schmidt decomposable multipartite state, then the Schmidt number is equal to the rank of the reduced density matrix of one of the subsystems.
\end{theorem}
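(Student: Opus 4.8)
The plan is to compute the reduced density matrix of a single subsystem directly from the Schmidt form and simply read off its rank. Starting from $\ket{\psi} = \sum_\ell \lambda_\ell \ket{\ell_{A_1}}\ket{\ell_{A_2}}\cdots\ket{\ell_{A_n}}$ with every $\lambda_\ell > 0$ and each family $\{\ket{\ell_{A_j}}\}_\ell$ orthonormal in its own Hilbert space, I would first form the full density operator $\outpro{\psi}{\psi}$ as a double sum over indices $\ell$ and $\ell'$, and then trace out every subsystem except a chosen $A_k$.

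The key step is that tracing over each subsystem $A_j$ with $j \neq k$ contributes a factor $\inpro{\ell'_{A_j}}{\ell_{A_j}} = \delta_{\ell\ell'}$ by orthonormality of its Schmidt basis; this is exactly the mechanism already exploited in Theorem \ref{thm:reduced} and Theorem \ref{thm:reducedTripartite}. The product of these Kronecker deltas over all $j \neq k$ is again $\delta_{\ell\ell'}$, so the double sum collapses to a single sum and I obtain
\[
    \rho_{A_k} = \sum_\ell \lambda_\ell^2 \outpro{\ell_{A_k}}{\ell_{A_k}}.
\]

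The final step is to observe that this expression is already a spectral decomposition of $\rho_{A_k}$: since the $\{\ket{\ell_{A_k}}\}$ are orthonormal, they are eigenvectors with eigenvalues $\lambda_\ell^2 > 0$. Hence $\rank{\rho_{A_k}}$ equals the number of nonzero $\lambda_\ell$, which is by definition the Schmidt number. Because the argument is identical for every choice of $k$, the rank of the reduced density matrix of any single subsystem equals the Schmidt number, justifying the phrasing ``one of the subsystems''.

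I do not anticipate a genuine obstacle here; the only points requiring care are to confirm that collapsing the several partial traces really does leave a single $\delta_{\ell\ell'}$ rather than some more intricate coupling of indices, and to note that because all $\lambda_\ell$ appearing in the Schmidt form are taken strictly positive, the count of nonzero eigenvalues of $\rho_{A_k}$ coincides exactly with the Schmidt number, with no degeneracy or cancellation to worry about.
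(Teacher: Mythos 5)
Your proof is correct, and its core mechanism---tracing out the Schmidt form and letting orthonormality kill the cross terms---is the same as the paper's, but you execute it in a cleaner and more general way. The paper proves only the tripartite case and traces out just $C$, so the surviving ``subsystem'' is the composite $AB$ with $\rho^{AB} = \sum_\ell \lambda_\ell^2 \outpro{\ell_A\ell_B}{\ell_A\ell_B}$; you instead trace all the way down to a single factor $A_k$ in the general $n$-partite setting, which matches the theorem's phrase ``one of the subsystems'' more literally and covers all $n$ at once. More substantively, the paper does not read the rank off its expression directly: it writes a separate spectral decomposition $\rho^{AB} = \sum_i^r p_i \outpro{i_{AB}}{i_{AB}}$, invokes a unitary $U$ connecting the two orthonormal families $\{\ket{\ell_A\ell_B}\}$ and $\{\ket{i_{AB}}\}$, and concludes $d=r$ because similar matrices have equal rank. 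Your observation that $\sum_\ell \lambda_\ell^2 \outpro{\ell_{A_k}}{\ell_{A_k}}$ is \emph{already} a spectral decomposition (orthonormal eigenvectors, strictly positive eigenvalues) makes that detour unnecessary; the unitary-similarity step buys nothing here, since it rests on the same orthonormality fact you use directly. Your one flagged point of care is also handled correctly: the product $\prod_{j\neq k}\delta_{\ell\ell'}$ of the per-subsystem deltas is just $\delta_{\ell\ell'}$, and indeed a single such factor already collapses the double sum.
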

\begin{proof}
We show here for tripartite states.
    Consider the Schmidt decomposition of \(\ket{\psi} = \sum_\ell\lambda_\ell\ket{\ell_A}\ket{\ell_B}\ket{\ell_C}\).
    \begin{align*}
        \rho^{ABC} &= \outpro{\psi}{\psi}\\
                   &= \sum_{\ell,m}\lambda_\ell\lambda_m\outpro{\ell_A}{m_A}\otimes\outpro{\ell_B}{m_b}\otimes\outpro{\ell_C}{m_C}\\
        \rho^{AB} &= \sum_\ell^d\lambda_\ell^2\outpro{\ell_A\ell_B}{\ell_A\ell_B}
    \end{align*}
    Consider the spectral decomposition of \(\rho^{AB}\) given as 
    \begin{align*}
        \rho^{AB} = \sum_i^r p_i\outpro{i_{AB}}{i_{AB}}
    \end{align*}
    Since the states \(\{\ket{\ell_A\ell_B}\}\) and \(\{\ket{i_{AB}}\}\) are orthonormal sets, they can each be extended to orthonormal bases using Gram-Schmidt orthogonalization process. This implies that there is unitary matrix \(U\) that takes set of vectors \(\{\ket{\ell_A\ell_B}\}\) to \(\{\ket{i_{AB}}\}\).
    \begin{align*}
        U\rho^{AB}U^\dagger = \sum_i^r p_i\outpro{\ell_A\ell_B}{\ell_A\ell_B}
    \end{align*}
    Since \(\rho^{AB}\) and \(U\rho^{AB}U^\dagger\) are similar matrices, they must have the same rank, i.e. \(d = r\).
\end{proof}
In the book \cite{nielsen00} (attributed to Thapliyal), for a bipartite state \(\ket{\psi} = \alpha \ket{\phi} + \beta\ket{\gamma}\), the Schmidt number satisfy the inequality \(Sch(\psi)\geq |Sch(\phi) - Sch(\gamma)|\). Tt is a simple consequence of Theorem \ref{thm:reduced}. We show that it is true for multipartite Schmidt decomposable states. 
\begin{theorem}
    Given three multipartite Schmidt decomposable states \(\ket{\psi}, \ket{\phi}\) and \(\ket{\gamma}\) such that \(\ket{\psi} = \alpha \ket{\phi} + \beta\ket{\gamma}\), then 
    \begin{align}
        Sch(\psi)\geq |Sch(\phi) - Sch(\gamma)|
    \end{align}
\end{theorem}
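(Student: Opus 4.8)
The plan is to reduce the multipartite statement to a statement about ranks of matrices, exactly as in the bipartite case, and then invoke the reverse triangle inequality for matrix rank. First I would fix a single subsystem, say \(A_1\), and for any state \(\ket{\chi}\) on the joint space let \(M_\chi\) denote its coefficient matrix under the bipartition \(A_1 \mid A_2\cdots A_n\): the matrix whose rows are indexed by the basis of \(\mathcal{H}_{A_1}\) and whose columns are indexed by the product basis of the remaining factors, with entries the amplitudes \(a_{i_1i_2\cdots i_n}\). By the same computation as in Theorem \ref{thm:reduced} and its multipartite extension in Theorem \ref{thm:reducedTripartite}, the reduced density matrix of \(A_1\) is \(\rho^{A_1}_\chi = M_\chi M_\chi^\dagger\), so that \(\rank{\rho^{A_1}_\chi} = \rank{M_\chi}\). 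For a Schmidt decomposable state this rank is exactly the Schmidt number, by the theorem identifying the Schmidt number with the rank of a one-subsystem reduced density matrix; hence \(Sch(\chi) = \rank{M_\chi}\) for each of \(\ket{\psi}, \ket{\phi}\) and \(\ket{\gamma}\).

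The crucial observation is that reshaping amplitudes into the coefficient matrix is a linear, basis-independent operation, so the hypothesis \(\ket{\psi} = \alpha\ket{\phi} + \beta\ket{\gamma}\) translates directly into \(M_\psi = \alpha M_\phi + \beta M_\gamma\), even though the three states generally have entirely different Schmidt bases. I would then apply the reverse triangle inequality for rank, \(\rank{X+Y}\geq \modulus{\rank{X} - \rank{Y}}\), which follows from subadditivity of rank: writing \(X = (X+Y) - Y\) gives \(\rank{X}\leq \rank{X+Y} + \rank{Y}\), hence \(\rank{X+Y}\geq \rank{X} - \rank{Y}\), and the symmetric argument with the roles of \(X\) and \(Y\) exchanged yields the absolute value.

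Applying this with \(X = \alpha M_\phi\) and \(Y = \beta M_\gamma\), and using that multiplication by the nonzero scalars \(\alpha, \beta\) preserves rank so that \(\rank{\alpha M_\phi} = Sch(\phi)\) and \(\rank{\beta M_\gamma} = Sch(\gamma)\), I would obtain \(Sch(\psi) = \rank{M_\psi} \geq \modulus{Sch(\phi) - Sch(\gamma)}\), as required. The step I expect to require the most care is the reduction itself rather than any single calculation: one must confirm that the same bipartition \(A_1\) versus the rest simultaneously computes the multipartite Schmidt numbers of all three states, which is exactly what the reduced-density-matrix characterization guarantees, since for a Schmidt decomposable state every single-subsystem reduced density matrix has rank equal to the Schmidt number. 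A minor caveat worth recording is the nondegeneracy assumption \(\alpha, \beta \neq 0\); if either coefficient vanishes the superposition collapses to a scalar multiple of a single state and the inequality should be read accordingly.
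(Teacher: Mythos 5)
Your proposal is correct and takes essentially the same route as the paper's proof: both flatten the amplitudes into a coefficient matrix for the one-subsystem-versus-rest bipartition, use $\rho^{A_1} = M M^\dagger$ and the identification of the Schmidt number with $\rank{M}$, exploit linearity to get $M_\psi = \alpha M_\phi + \beta M_\gamma$, and conclude by the reverse triangle inequality for rank. The only differences are cosmetic: the paper writes the argument for tripartite states ``for readability'' and asserts the rank inequality without proof, whereas you carry out the general $n$-partite case, prove the rank inequality from subadditivity, and note the harmless caveat $\alpha,\beta \neq 0$.
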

\begin{proof}
    We show it for tripartite state for readability.
    \begin{align*}
    \ket{\psi} &= \alpha\ket{\phi} + \beta\ket{\gamma}\\ &= \sum_{ijk}(\alpha a^\phi_{ijk} + \beta a^\gamma_{ijk})\ket{i_A}\ket{j_B}\ket{k_C}\\ &= \sum_{ijk}a_{ijk}\ket{i_A}\ket{j_B}\ket{k_C}
\end{align*}
Using Theorem \ref{thm:reducedTripartite}, we get 
\begin{align*}
    \rho_A &= AA^\dagger\\
           &= (\alpha A_\phi + \beta A_{\gamma})(\alpha A_\phi + \beta A_{\gamma})^\dagger
\end{align*}
This implies that
\begin{align*}
    Sch(\psi) &= rank(\rho_A)\\
              &= rank((\alpha A_\phi + \beta A_{\gamma})(\alpha A_\phi + \beta A_{\gamma})^\dagger)\\
              &= rank(\alpha A_\phi + \beta A_{\gamma})\\
              &\geq |rank(\alpha A_\phi) - rank(\beta A_{\gamma})|\\
              &= |rank(A_\phi) - rank(A_{\gamma}) |\\
              &= |Sch(\phi) - Sch(\gamma)|
\end{align*}
\end{proof}

\begin{observation}
    Linear combination of Schmidt decomposable multipartite states is not necessarily Schmidt decomposable. An an example, we can work with \(\ket{W} = \frac{1}{\sqrt{3}}(\ket{001}) + \ket{010} + \ket{100}\) which is not Schmidt decomposable, but each of the terms are Schmidt decomposable.
\end{observation}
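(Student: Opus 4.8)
The statement has two halves, and the first is immediate: each summand $\ket{001},\ket{010},\ket{100}$ is a product state $\ket{i_A}\ket{j_B}\ket{k_C}$, hence already in Schmidt form with a single term ($\lambda_0=1$), so it is trivially Schmidt decomposable. The substance is to show that their normalized sum $\ket{W}$ is not. The plan is to exploit the fact that a genuine tripartite Schmidt decomposition is far more rigid than it appears: if $\ket{W}=\sum_\ell \lambda_\ell \ket{\ell_A}\ket{\ell_B}\ket{\ell_C}$ existed, then grouping the first two parties would give $\ket{W}=\sum_\ell \lambda_\ell\big(\ket{\ell_A}\ket{\ell_B}\big)\ket{\ell_C}$, i.e.\ a \emph{bipartite} Schmidt decomposition across the $AB\,|\,C$ cut whose vectors on the $AB$ side are the product states $\ket{\ell_A}\ket{\ell_B}$. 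I would derive a contradiction by showing that the $AB\,|\,C$ Schmidt vectors of $\ket{W}$ are forced to be entangled across $A\,|\,B$.

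First I would compute the reduced density operator $\rho^{AB}$ using Theorem~\ref{thm:reducedTripartite}, writing the coefficients of $\ket{W}$ as the $n_An_B\times n_C$ matrix $B$ and forming $\rho^{AB}=BB^\dagger$. A direct evaluation gives $\rho^{AB}=\tfrac{1}{3}\big(\outpro{00}{00}+(\ket{01}+\ket{10})(\bra{01}+\bra{10})\big)$, whose nonzero spectrum consists of the two distinct eigenvalues $\tfrac{2}{3}$ and $\tfrac{1}{3}$, with corresponding eigenvectors $\tfrac{1}{\sqrt{2}}(\ket{01}+\ket{10})$ and $\ket{00}$. The first of these is maximally entangled across $A\,|\,B$ and in particular is not a product vector.

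Now comes the key structural fact: for any tripartite Schmidt decomposition the orthonormality of the $\ket{\ell_C}$ collapses $\rho^{AB}$ to $\sum_\ell \lambda_\ell^2\,\outpro{\ell_A\ell_B}{\ell_A\ell_B}$, a mixture of \emph{product} states. Since the nonzero eigenvalues $\tfrac{2}{3},\tfrac{1}{3}$ of $\rho^{AB}$ are distinct, the corresponding eigenvectors are uniquely determined up to phase and must coincide with the product vectors $\ket{\ell_A}\ket{\ell_B}$; but $\tfrac{1}{\sqrt{2}}(\ket{01}+\ket{10})$ is not a product vector, a contradiction. Hence no tripartite Schmidt decomposition of $\ket{W}$ exists.

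The step I expect to require the most care is this uniqueness argument, i.e.\ asserting that the eigenvectors of $\rho^{AB}$ must be product states. It is clean here precisely because the nonzero eigenvalues are nondegenerate; in the presence of degeneracy one could only conclude that the relevant eigenspace is \emph{spanned} by product vectors, which need not force every eigenvector to be a product. For $\ket{W}$ the distinctness of $\tfrac{2}{3}$ and $\tfrac{1}{3}$ removes this difficulty entirely. As a cross-check I would note that the same conclusion follows from Theorem~\ref{thm:tripartite}: although the matrix set $\mathcal{A}=\{A_0,A_1\}$ does commute positively, for every admissible diagonalizing pair $(P,Q)$---necessarily a phased permutation because of the degeneracies of $A_0^\dagger A_0$ and $A_1^\dagger A_1$---one finds $\tr{SS^\dagger}<1$, so $S$ fails to be scaled unitary.
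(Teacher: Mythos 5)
Your proof is correct. One point of comparison to settle first: the paper states this observation without any proof at all, so there is no in-paper argument to measure against; your write-up supplies the missing justification, and it does so largely with the paper's own toolkit. The first half (each of $\ket{001},\ket{010},\ket{100}$ is a product state, hence trivially in Schmidt form) is immediate, as you say. For the second half, your computation of $\rho^{AB}=\tfrac{1}{3}\bigl(\outpro{00}{00}+(\ket{01}+\ket{10})(\bra{01}+\bra{10})\bigr)$ via Theorem~\ref{thm:reducedTripartite} is right, and the structural fact you invoke --- that a tripartite Schmidt decomposition forces $\rho^{AB}=\sum_\ell\lambda_\ell^2\outpro{\ell_A\ell_B}{\ell_A\ell_B}$, an orthonormal ensemble of \emph{product} vectors --- is exactly the identity the paper itself uses when proving that the Schmidt number equals the rank of a reduced density matrix. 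Your uniqueness step is sound and is the crux: if $\rho^{AB}=\sum_\ell p_\ell\outpro{v_\ell}{v_\ell}$ with $\{\ket{v_\ell}\}$ orthonormal and $p_\ell>0$, then each $\ket{v_\ell}$ is an eigenvector of $\rho^{AB}$ with eigenvalue $p_\ell$; since the eigenvalue $\tfrac{2}{3}$ is nondegenerate, some $\ket{\ell_A}\ket{\ell_B}$ would have to equal $\tfrac{1}{\sqrt{2}}(\ket{01}+\ket{10})$ up to phase, which is impossible for a product vector. You are also right that nondegeneracy is what makes this clean, and your honesty about the degenerate case is warranted. The cross-check via Theorem~\ref{thm:tripartite} does go through (the degeneracy structure of $A_0^\dagger A_0=\tfrac13 I$ and $A_1^\dagger A_1=\mathrm{diag}(\tfrac13,0)$ forces $P,Q$ to be phased permutations, and in every case at least one of the diagonals of $P^\dagger A_iQ^\dagger$ vanishes, giving $\tr{SS^\dagger}\le\tfrac23<1$), but it is secondary: your main argument is self-contained and does not lean on that theorem, which is a virtue here.
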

In light of the above observation, we prove the following result for tensor product of Schmidt decomposable states.
\begin{definition}
    An \(n\)-partite Schmidt decomposable state of Schmidt rank \(k\) is said to have Schmidt dimension \((n,k)\).
\end{definition}
\begin{theorem}
    Given two states \(\ket{\psi}\) and \(\ket{\phi}\) of Schmidt dimensions \((m,k_\psi)\) and \((n, k_\phi)\) such that \(m\geq n\), then \(\ket{\psi}\ket{\phi}\) has dimension \((n, k_\psi k_\phi)\) and there are \({m-1\choose n-1}\) ways to construct the \(n\)-partite systems.
\end{theorem}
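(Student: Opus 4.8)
The plan is to write down the Schmidt decompositions of the two factors, take their tensor product, and then exhibit exactly which mergings of the physical subsystems realise an $n$-partite Schmidt form. Writing $\ket{\psi} = \sum_{\ell=1}^{k_\psi}\lambda_\ell\ket{\ell_{A_1}}\cdots\ket{\ell_{A_m}}$ and $\ket{\phi} = \sum_{r=1}^{k_\phi}\mu_r\ket{r_{B_1}}\cdots\ket{r_{B_n}}$, distributing the product over the two sums gives $\ket{\psi}\ket{\phi} = \sum_{\ell,r}\lambda_\ell\mu_r\,\ket{\ell_{A_1}}\cdots\ket{\ell_{A_m}}\ket{r_{B_1}}\cdots\ket{r_{B_n}}$, a state on $m+n$ physical subsystems indexed by the pair $(\ell,r)$. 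To view this as $n$-partite I would merge the $m+n$ subsystems into $n$ groups $C_1,\dots,C_n$, attaching each $\ket{\phi}$-subsystem $B_j$ to group $C_j$ and partitioning the ordered list $A_1,\dots,A_m$ into $n$ consecutive nonempty blocks, so that block $j$ is absorbed into $C_j$.

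First I would verify that such a grouping is a genuine Schmidt decomposition. The state in group $C_j$ for the term $(\ell,r)$ is $\ket{(\ell,r)_{C_j}}$, the product of the $\ket{\ell_{A_i}}$ for the $A_i$ assigned to $C_j$ together with $\ket{r_{B_j}}$. Using that the $\ket{\ell_{A_i}}$ are orthonormal in $\ell$ and the $\ket{r_{B_j}}$ are orthonormal in $r$, the inner product $\inpro{(\ell,r)_{C_j}}{(\ell',r')_{C_j}}$ factors as $\delta_{\ell\ell'}\delta_{rr'}$, provided each group carries at least one $A_i$ (to detect the index $\ell$) and exactly its one $B_j$ (to detect $r$). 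Hence the vectors $\{\ket{(\ell,r)_{C_j}}\}_{(\ell,r)}$ are orthonormal in each group, the coefficients $\lambda_\ell\mu_r$ are nonnegative, the normalisation $\sum_{\ell,r}\lambda_\ell^2\mu_r^2=1$ follows from the normalisations of the two factors, and since $\lambda_\ell,\mu_r>0$ on their Schmidt ranges there are exactly $k_\psi k_\phi$ nonvanishing terms. This establishes the claimed Schmidt dimension $(n,k_\psi k_\phi)$.

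Next I would count the admissible groupings. The requirement that every group detect $\ell$ forces each $C_j$ to contain at least one $A_i$, and the requirement that every group detect $r$, combined with there being exactly $n$ subsystems $B_1,\dots,B_n$, forces by pigeonhole exactly one $\ket{\phi}$-subsystem per group (a group with two $B$'s would leave another group with none, which then cannot detect $r$). An admissible construction is therefore precisely a splitting of $A_1,\dots,A_m$ into $n$ nonempty ordered blocks, equivalently a composition $m=m_1+\cdots+m_n$ with each $m_j\ge 1$, and the number of these is $\binom{m-1}{n-1}$; this count also shows the hypothesis $m\ge n$ is exactly the condition for any construction to exist.

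The main obstacle I expect is the combinatorial step, namely pinning down precisely which subsystem regroupings should be counted as distinct. The orthonormality analysis cleanly forces each group to absorb $\ge 1$ subsystem of $\ket{\psi}$ and exactly one of $\ket{\phi}$, but whether the $\binom{m-1}{n-1}$ count treats the $A_i$ as ordered blocks or as labelled assignments must be stated carefully: the formula $\binom{m-1}{n-1}$ is exactly the number of compositions (ordered nonempty blocks) obtained by choosing the $n-1$ cut points among the $m-1$ gaps between $A_1,\dots,A_m$, so I would fix that convention explicitly. Everything else reduces to the routine orthonormality and nonnegativity checks already used in the proof of Theorem \ref{thm:main}.
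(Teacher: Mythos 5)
Your proposal is correct and follows essentially the same route as the paper: expand the tensor product of the two Schmidt decompositions, absorb one $B$-subsystem into each of $n$ groups together with a nonempty consecutive block of $A$-subsystems, verify orthonormality of the grouped states, and count compositions of $m$ into $n$ positive parts to get $\binom{m-1}{n-1}$. Your explicit $\delta_{\ell\ell'}\delta_{rr'}$ factorization and the pigeonhole argument forcing exactly one $B_j$ per group make the orthonormality and admissibility claims more precise than the paper's ``it is easy to verify,'' but the underlying construction and count are identical.
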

\begin{proof}
    Starting with the Schmidt decompositions of \(\ket{\psi}\) and \(\ket{\phi}\)
    \begin{align*}
        \ket{\psi} &= \sum_{i}^{k_\psi} \lambda_i^\psi \ket{i_{A_1}...i_{A_m}}\\
        \ket{\phi} &= \sum_{j}^{k_\phi} \lambda_j^\phi \ket{j_{B_1}...i_{B_n}}\\
        \ket{\psi}\ket{\phi} &= \sum_{i,j}^{k_\psi, k_\phi} \lambda_i^\psi \lambda_j^\phi \ket{i_{A_1}...i_{A_m}j_{B_1}...j_{B_n}}
    \end{align*}
    Notice that the above expression for \(\ket{\psi}\ket{\phi}\) is not a Schmidt decomposition of \((m + n)\)-partite system as we lose the bijection between states of each part. We partition \(m\)-parts of \(\ket{\psi}\) into \(n\) groups such that each group has at least 1 part. Let \(x_1, ..., x_n\) be the sizes of each groups. For each group, we link one part from \(\ket{\phi}\) as shown below:
    \begin{align*}
        \ket{\psi}\ket{\phi} &= \sum_{i,j}^{k_\psi, k_\phi} \lambda_i^\psi \lambda_j^\phi \ket{i_{A_1}...i_{A_{x_1}}j_{B_1}}\ket{i_{A_{x_1 + 1}}...A_{x_1 + x_2}j_{B_2}}\\&...\ket{i_{m-(x_n) + 1}...i_{m}j_{B_n}}
    \end{align*}
    It is easy to verify that states of each group are orthonormal and they don't repeat in the sum, giving a valid Schmidt decomposition. Hence, \(\ket{\psi}\ket{\phi}\) can be partitioned into \(n\) parts such that the \(n\)-partite system has Schmidt decomposition of rank \(k_\psi k_\phi\).

    The number of such groups for fixed permutation of parts in \(\ket{\psi}\) and \(\ket{\phi}\) is equal to the number of solutions of 
    \begin{align*}
        x_1 + x_2 + \cdots + x_n &= m\\
        x_i &\geq 1
    \end{align*}
    which is given by \({m-1\choose n-1}\).
\end{proof}
\section{Purification}
Given a density matrix \(\rho\) of a system \(A\), the task of purification is to attach another subsystem \(R\) such that the combine system is in pure state \(\ket{AR}\) and the tracing out \(R\) gives back \(\rho\). The standard trick \cite{Schrödinger_1936, HUGHSTON199314, Jaynes, Hadjisavvas, Gisin} is to keep the dimension of \(R\) to be at least the rank of \(\rho\) and an embed orthonormal basis into \(\rho\) as follows starting with the spectral decomposition of \(\rho\).
\begin{align*}
    \rho &= \sum_i \lambda_i \outpro{i_A}{i_A}\\
    \ket{AR} &= \sum_i \sqrt{\lambda_i}\ket{i_A}\ket{i_R}\\
    \rho &= \ptr{R}{\outpro{Ar}{Ar}}\\
         &= \sum_i \lambda_i \outpro{i_A}{i_A}
\end{align*}
Essentially, we have used the Schmidt decomposition of \(\ket{AR}\).

Regarding purification, one can ask some interesting questions like given a density matrix of a composite system, can it be purified to a Schmidt decomposable state? If a multipartite state is not Schmidt decomposable, can an additional system make it Schmidt decomposable?
\begin{lemma}\cite{nielsen00}\label{thm:twopure}
    Two purifications \(\ket{AR_1}\) and \(\ket{AR_2}\) of a state \(\rho\) are unitarily linked as \begin{align*}
        \ket{AR_1} = (I\otimes U_R)\ket{AR_2}
    \end{align*} 
\end{lemma}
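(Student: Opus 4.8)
The plan is to reduce an arbitrary purification to a common canonical form dictated by the spectral structure of \(\rho\), and then link any two such forms by a unitary acting on the reference system \(R\) alone. First I would fix once and for all a spectral decomposition \(\rho = \sum_i \lambda_i \outpro{i_A}{i_A}\), so that the eigenbasis \(\{\ket{i_A}\}\) of the purified state is held fixed for the entire argument; this is what lets me compare the two purifications on a common footing.

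Next I would expand an arbitrary purification in this fixed \(A\)-basis, writing \(\ket{AR} = \sum_i \ket{i_A}\ket{\tilde{i}_R}\) for some vectors \(\ket{\tilde{i}_R}\) in \(R\) that are not assumed normalized or orthogonal. The key computation is to trace out \(R\) exactly as in Theorem \ref{thm:reduced}, which gives \(\ptr{R}{\outpro{AR}{AR}} = \sum_{ij}\inpro{\tilde{j}_R}{\tilde{i}_R}\outpro{i_A}{j_A}\). Matching this against \(\rho = \sum_i \lambda_i \outpro{i_A}{i_A}\) forces \(\inpro{\tilde{j}_R}{\tilde{i}_R} = \lambda_i\delta_{ij}\), so the \(R\)-side vectors are mutually orthogonal with \(\norm{\tilde{i}_R}^2 = \lambda_i\). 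Setting \(\ket{\tilde{i}_R} = \sqrt{\lambda_i}\ket{i_R}\) for \(\lambda_i \neq 0\) yields the canonical form \(\ket{AR} = \sum_i \sqrt{\lambda_i}\ket{i_A}\ket{i_R}\) with \(\{\ket{i_R}\}\) orthonormal. Applying this to each of the two purifications produces \(\ket{AR_1} = \sum_i \sqrt{\lambda_i}\ket{i_A}\ket{i_{R_1}}\) and \(\ket{AR_2} = \sum_i \sqrt{\lambda_i}\ket{i_A}\ket{i_{R_2}}\), where both \(R\)-sets are orthonormal and share the same coefficients \(\sqrt{\lambda_i}\).

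Finally I would invoke the Gram-Schmidt construction already used earlier in the paper: since \(\{\ket{i_{R_1}}\}\) and \(\{\ket{i_{R_2}}\}\) are orthonormal sets in \(R\), each extends to an orthonormal basis, and there is a unitary \(U_R\) on \(R\) with \(U_R\ket{i_{R_2}} = \ket{i_{R_1}}\) for every \(i\). Then \((I\otimes U_R)\ket{AR_2} = \sum_i \sqrt{\lambda_i}\ket{i_A}(U_R\ket{i_{R_2}}) = \ket{AR_1}\), which is exactly the claimed unitary link.

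The step I expect to be the main obstacle is handling degenerate and vanishing eigenvalues cleanly. Because I expand \emph{both} purifications in a single fixed spectral decomposition of \(\rho\), the orthogonality relation \(\inpro{\tilde{j}_R}{\tilde{i}_R} = \lambda_i\delta_{ij}\) holds irrespective of degeneracy; the terms with \(\lambda_i = 0\) simply drop out of the sum, and \(U_R\) is pinned down only on the orthonormal vectors attached to the nonzero eigenvalues and extended arbitrarily on the orthogonal complement. The one delicate point is therefore justifying that an arbitrary purification genuinely admits the canonical form with the prescribed basis \(\{\ket{i_A}\}\), and this is precisely the partial-trace computation above; once it is secured, the unitary freedom on \(R\) follows immediately.
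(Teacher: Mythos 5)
Your proof is correct and follows the same essential route as the paper: write both purifications in the canonical form \(\sum_i \sqrt{\lambda_i}\ket{i_A}\ket{i_R}\) relative to a single fixed spectral decomposition of \(\rho\), then link the two orthonormal \(R\)-side sets by a unitary acting on \(R\) alone. The only difference is that the paper simply \emph{asserts} that both purifications take this canonical form, whereas you derive it via the partial-trace computation forcing \(\inpro{\tilde{j}_R}{\tilde{i}_R} = \lambda_i\delta_{ij}\); your version thus closes the step the paper leaves implicit, including the treatment of zero and degenerate eigenvalues.
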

\begin{proof}
    The key idea is that two orthonormal sets are unitarily connected. Using the spectral decomposition of \(\rho\), we have \(\rho = \sum_i \lambda_i\outpro{i_A}{i_A}\).
    \begin{align*}
        \ket{AR_1} &= \sum_i\sqrt{\lambda_i}\ket{i_A}\ket{i^R_1}\\
        \ket{AR_2} &= \sum_i\sqrt{\lambda_i}\ket{i_A}\ket{i^R_2}\\
    \end{align*}
    Since the orthonormal sets \(\{\ket{i^R_1}\}\) and \({\ket{i^R_2}}\) can be connected by unitary matrix \(U\) such that \(\ket{i^R_1} = U\ket{i^R_2}\), we have
    \begin{align*}
        \ket{AR_1} &= \sum_i\sqrt{\lambda_i}\ket{i_A}\ket{i^R_1}\\
                   &= \sum_i\sqrt{\lambda_i}\ket{i_A}U\ket{i^R_2}\\
                   &= (I\otimes U) \sum_i\sqrt{\lambda_i}\ket{i_A}\ket{i^R_2}\\
                   &= (I\otimes U)\ket{AR_2}
    \end{align*}
\end{proof}
\begin{theorem}
    Every purification of a multipartite system \(\rho\) is either Schmidt decomposable or not Schmidt decomposable.
\end{theorem}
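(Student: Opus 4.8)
The plan is first to read the statement correctly: taken literally it is a tautology, so the intended content must be that Schmidt decomposability is a property of the density matrix $\rho$ being purified rather than of the particular purification chosen. Concretely, I want to show that if one purification $\ket{AR_2}$ of $\rho$ is Schmidt decomposable then every purification $\ket{AR_1}$ of $\rho$ is Schmidt decomposable, and conversely, so that the class of purifications cannot split into some decomposable and some not.

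The key tool is Lemma \ref{thm:twopure}: any two purifications are linked by a unitary acting only on the reference system, $\ket{AR_1} = (I\otimes U_R)\ket{AR_2}$. First I would observe that here $A$ is itself the multipartite system $A_1\otimes\cdots\otimes A_n$, so the purification lives on the $(n+1)$-partite space $A_1\otimes\cdots\otimes A_n\otimes R$, and the linking operator factorises as $I_{A_1}\otimes\cdots\otimes I_{A_n}\otimes U_R$. This is precisely a tensor product of local unitaries, one on each of the $n+1$ subsystems, i.e. an operator of the exact form treated in Theorem \ref{thm:multiUnitary}.

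Next I would invoke the local-unitary invariance of Schmidt decomposability recorded just before Theorem \ref{thm:multiUnitary}: applying a product of local unitaries to a Schmidt-decomposed state yields a valid Schmidt decomposition of the transformed state with the same Schmidt coefficients, since each local unitary preserves the orthonormality of the Schmidt vectors in its own factor. Applying this with the product unitary $I\otimes U_R$ shows that if $\ket{AR_2}$ is Schmidt decomposable then so is $\ket{AR_1}$; applying it instead with $I\otimes U_R^\dagger$ recovers $\ket{AR_2}$ from $\ket{AR_1}$ and gives the converse. Hence all purifications of $\rho$ share the same decomposability status, which is the assertion.

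The main obstacle is a matching-of-dimensions subtlety: distinct purifications may use reference systems of different sizes, in which case they are related by an isometry rather than a unitary and Lemma \ref{thm:twopure} applies only after embedding the smaller reference space into the larger one. I would handle this by fixing a common reference dimension, for instance padding every purification up to a fixed $\dim{R}\geq \rank{\rho}$ with zero-weight basis vectors, after which the clean unitary form of Lemma \ref{thm:twopure} is available and the argument above goes through unchanged.
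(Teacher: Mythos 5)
Your proposal is correct and follows essentially the same route as the paper: the paper's own proof is exactly the combination of Lemma \ref{thm:twopure} (purifications differ by $I\otimes U_R$) with the local-unitary invariance underlying Theorem \ref{thm:multiUnitary}. Your version is in fact more careful than the paper's one-line proof, since you rightly note that one must use the invariance argument (a product of local unitaries maps a Schmidt decomposition to a Schmidt decomposition) rather than the literal statement of Theorem \ref{thm:multiUnitary}, which presupposes both states are decomposable, and you also address the reference-dimension mismatch that the paper leaves implicit.
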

\begin{proof}
    The proof follows by combining Lemma \ref{thm:twopure} with Theorem \ref{thm:multiUnitary}.
\end{proof}
\section{Conclusion}
In this paper, we obtained a necessary and sufficient condition for the existence of Schmidt decomposition of multipartite states. This condition is constructive and is used to obtain an efficient algorithm for obtaining the decomposition of a Schmidt decomposable state.

\bibliographystyle{plainnat}
\bibliography{main}

\end{document}